\def\mindex#1{\index{#1}}
\def\sq{\hbox{\rlap{$\sqcap$}$\sqcup$}}
\def\qed{\ifmmode\sq\else{\unskip\nobreak\hfil
\penalty50\hskip1em\null\nobreak\hfil\sq
\parfillskip=0pt\finalhyphendemerits=0\endgraf}\fi\medskip}
\long\def\defbox#1{\framebox[.9\hsize][c]{\parbox{.85\hsize}{%
\parindent=0pt
\baselineskip=12pt plus .1pt      
\parskip=6pt plus 1.5pt minus 1pt 
 #1}}}
\long\def\beginbox#1\endbox{\subsection*{}%
\hbox{\hspace{.05\hsize}\defbox{\medskip#1\bigskip}}%
\subsection*{}}
\def\endbox{}
\newsavebox{\junk}
\savebox{\junk}[1.6mm]{\hbox{$|\!|\!|$}}
\def\bC{{\mathbb C}}
\def\bE{{\mathbb E}}
\def\bR{{\mathbb R}}
\def\sfH{{\sf H}}
\def\bfmath#1{{\mathchoice{\mbox{\boldmath$#1$}}%
{\mbox{\boldmath$#1$}}%
{\mbox{\boldmath$\scriptstyle#1$}}%
{\mbox{\boldmath$\scriptscriptstyle#1$}}}}
\def\bfmY{\bfmath{Y}}
\def\bfmhhaY{\bfmath{\hhaY}} 
\def\bfmhhaY{\hbox to 0pt{$\widehat{\bfmY}$\hss}\widehat{\phantom{\raise 1.25pt\hbox{$\bfmY$}}}}
\def\til={{\widetilde =}}
\def\clC{{\cal C}}
\def\clN{{\cal N}}
 \def\FRAC#1#2#3{\genfrac{}{}{}{#1}{#2}{#3}}
\def\ddtp{{\mathchoice{\FRAC{1}{d^{\hbox to 2pt{\rm\tiny +\hss}}}{dt}}%
{\FRAC{1}{d^{\hbox to 2pt{\rm\tiny +\hss}}}{dt}}%
{\FRAC{3}{d^{\hbox to 2pt{\rm\tiny +\hss}}}{dt}}%
{\FRAC{3}{d^{\hbox to 2pt{\rm\tiny +\hss}}}{dt}}}}
\def\average#1,#2,{{1\over #2} \sum_{#1}^{#2}}
\def\eye(#1){{\bf(#1)}\quad}
\newtheorem{theorem}{{\bf Theorem}}
\newtheorem{lemma}{{\bf Lemma}}
\def\eq#1/{(\ref{e:#1})}
\newcommand{\beqn}[1]{\notes{#1}%
\begin{eqnarray} \elabel{#1}}
\newcommand{\eeqn}{\end{eqnarray} }
\newcommand{\beq}[1]{\notes{#1}%
\begin{equation}\elabel{#1}}
\newcommand{\eeq}{\end{equation}}
\def\bdes{\begin{description}}
\def\edes{\end{description}}
\newcounter{rmnum}
\newcounter{anum}
\def\ass(#1:#2){(#1\ref{#1:#2})}
\def\ritem#1{
\item[{\sf \ass(\current_model:#1)}]
}
\newenvironment{recall-ass}[1]{%
\begin{description}
\def\current_model{#1}}{
\end{description}
}
\newcommand{\Sigmay}{{\Sigmay}_{\yv}}
\def\wh{\widehat}
\def\herm{{\sfH}}
\newcommand{\snrul}{{\sf snr}_{\rm ul}}
\newcommand{\snrdl}{{\sf snr}_{\rm dl}}
\newcommand{\hvt}{\widetilde{\hv}}
\newcommand{\yvt}{\widetilde{\yv}}
\def\cg{{\clC\clN}} 
\newcommand{\normd}[1]{{\left\vert\kern-0.25ex\left\vert\kern-0.25ex\left\vert #1 
		\right\vert\kern-0.25ex\right\vert\kern-0.25ex\right\vert}}
\long\def\comment#1{}
\newcommand{\gv}{{\bf g}}
\newcommand{\hv}{{\bf h}}
\newcommand{\uv}{{\bf u}}
\newcommand{\wv}{{\bf w}}
\newcommand{\vv}{{\bf v}}
\newcommand{\xv}{{\bf x}}
\newcommand{\yv}{{\bf y}}
\newcommand{\zv}{{\bf z}}
\newcommand{\Bm}{{\bf B}}
\newcommand{\Gm}{{\bf G}}
\newcommand{\Hm}{{\bf H}}
\newcommand{\Sm}{{\bf S}}
\newcommand{\Um}{{\bf U}}
\newcommand{\Xm}{{\bf X}}
\newcommand{\Nc}{{\cal N}}
\newcommand{\Rc}{{\cal R}}
\newcommand{\Xc}{{\cal X}}
\newcommand{\lambdav}{\hbox{\boldmath$\lambda$}}
\newcommand{\phiv}{\hbox{\boldmath$\phi$}}
\newcommand{\psiv}{\hbox{\boldmath$\psi$}}
\newcommand{\Lambdam}{\hbox{\boldmath$\Lambda$}}
\newcommand{\Sigmam}{\hbox{\boldmath$\Sigma$}}
\newcommand{\Phim}{\hbox{\boldmath$\Phi$}}
\newcommand{\Psim}{\hbox{\boldmath$\Psi$}}
\newcommand{\trace}{{\rm Tr}}
\newcommand{\transp}{{\sf T}}
\newcommand{\hb}{\mathbb{h}}
\newcommand{\yb}{\mathbb{y}}
\title{Channel State Acquisition in FDD Massive MIMO: Rate-Distortion Bound and \\Effectiveness of ``Analog’’ Feedback}
\author{ Mahdi Barzegar Khalilsarai, Yi Song, Tianyu Yang, and Giuseppe Caire\\
	\vspace{2mm}
	$\{$m.barzegarkhalilsarai, yi.song, tianyu.yang, caire$\}$@tu-berlin.de
	\vspace{-3mm}
}
\begin{document}
	
	\maketitle
	
	\begin{abstract}
		We consider the problem of estimating channel fading coefficients (modeled as a correlated Gaussian vector) via Downlink (DL) training and Uplink (UL) feedback in wideband FDD massive MIMO systems. Using rate-distortion theory, we derive optimal bounds on the achievable channel state estimation error in terms of the number of training pilots in DL ($\beta_{\rm tr}$) and feedback dimension in UL ($\beta_{\rm fb}$) ,  with random, spatially isotropic pilots. It is shown that when the number of training pilots exceeds the channel covariance rank ($r$), the optimal rate-distortion feedback strategy achieves an estimation error decay of $\Theta (\text{SNR}^{-\alpha})$ in estimating the channel state, where $\alpha = \min (\beta_{\rm fb}/r , 1)$ is the so-called quality scaling exponent. We also discuss an ``analog" feedback strategy, showing that it can achieve the optimal quality scaling exponent for a wide range of training and feedback dimensions with no channel covariance knowledge and simple signal processing at the user side. Our findings are supported by numerical simulations comparing various strategies in terms of channel state mean squared error and achievable ergodic sum-rate in DL with zero-forcing precoding.
	\end{abstract}
	\begin{keywords}
		Wideband FDD massive MIMO, channel state estimation error, quality scaling exponent, rate-distortion theory, analog feedback. 
	\end{keywords}
	
	\section{Introduction}
	Massive MIMO consists of employing a large number of antennas ($M$) at the Base Station (BS) to simultaneously multiplex data over the spatial domain and serve a much smaller number of User Equipment (UEs) ($K$) at the same time-frequency resource in the Downlink (DL) \cite{larsson2014massive,boccardi2014five}. Achieving the capacity improvements of massive MIMO requires availability of accurate channel state information at the multi-antenna BS transmitter (CSIT). Therefore the BS needs to learn (or \textit{train}) the CSIT, namely the fading coefficients associated with each of its $M$ antennas and those of the $K$ UEs. In the time division duplexing (TDD) operation mode, the BS learns the CSIT by receiving pilots from the UEs in Uplink (UL) and, relying on UL-DL channel reciprocity, extrapolating the DL channel \cite{marzetta2006much}. In frequency division duplexing (FDD) mode however, channel reciprocity does not hold and the BS needs to broadcast training pilots in the DL to the UEs and receive their estimated channel state via explicit feedback in UL. The error in estimating the CSIT strongly affects the DL spectral efficiency. For example, it is well-known that when the error between true and estimated CSIT decreases as $O(\text{SNR}^{-\alpha})$ in SNR (equivalently, BS transmission power) for some constant $\alpha \in [0,1]$, then zero-forcing (ZF) precoding can achieve only a fraction $\alpha$ of the optimal degrees of freedom (DoF) per UE \cite{caire2007multiuser,jindal2006mimo}. Therefore the choice of CSIT training and feedback strategies is crucial in achieving faster estimation error decays. 
	
	In this paper, we study the CSIT estimation problem for wideband massive MIMO systems, in which the channel is modeled as a spatially correlated, stationary Gaussian random process that evolves in time according to a block-fading model \cite{biglieri1998fading}. We consider a generic design of random, spatially isotropic pilots of arbitrary dimension $\beta_{\rm tr}$ and we derive a lower bound on the achievable CSIT estimation error using rate-distortion theory and the idea of \textit{remote source coding} at the UE. This lower-bound results in an upper-bound on the achievable quality scaling exponent as a function of training and feedback dimension pairs $(\beta_{\rm tr},\beta_{\rm fb})$, showing the fastest rate of error decay among all feedback strategies. If covariance knowledge is available at the UE, we demonstrate how one can approach the optimal performance at the price of small overhead in the number of feedback bits using the entropy-coded scalar quantization (ECSQ) \cite{ziv1985universal}. We then study a variation of the well-known analog feedback (AF) strategy \cite{samardzija2006unquantized,thomas2005obtaining,marzetta2006fast}, in which the training measurements are sent to the BS via unquantized quadrature amplitude modulation (QAM), using which the BS computes an MMSE estimate of the channel (channel covariance knowledge at the BS is assumed). This variation of AF is an attractive strategy, because it requires no channel covariance knowledge and no sophisticated processing (namely, channel estimation and quantization) at the UE side, both of which come at a high price in the case of wideband massive MIMO channels. We emphasize this point by deriving an expression for the achievable quality scaling exponent with AF, and showing its optimality for a wide range of choices of training and feedback dimensions (see Fig. \ref{fig:quality_scaling_exponent}). 
	
	The effect of channel training and feedback on CSIT estimation error and spectral efficiency in FDD MIMO systems has been studied before in several works \cite{kobayashi2011training,caire2007multiuser,caire2007quantized,jiang2015achievable,shirani2009channel}. In \cite{caire2007multiuser} and \cite{caire2007quantized} lower bounds on the achievable DL rate with ZF precoding for analog and digital (quantized) feedback are given, where it is assumed that the number of training pilots exceeds the channel dimension. In \cite{jiang2015achievable} achievable rates of an FDD massive MIMO system with optimized training pilots and with channel covariance knowledge at the UE side was studied. In \cite{bazzi2018amount} the authors studied sufficient conditions to achieve full DoF by considering channel covariance knowledge at the UE and an error-free channel state feedback to the BS (i.e. an ideal feedback link). The present work provides the following novelties with respect to the above:
	\begin{enumerate}
		\item We consider training and feedback of the channel on the whole bandwidth of $N$ subcarrier, while most previous works assume a narrow-band model, neglecting frequency-domain channel correlation. From a practical standpoint, the recent releases of the 5G new radio heavily emphasize on exploiting this frequency-domain correlation feature of the channel to reduce feedback overhead \cite{gpp2020phylayer}. This aspect is captured in our model and reflected in the rate-distortion bound as well as the proposed AF strategy.
		\item We consider training the channel with an arbitrary number of pilots ($\beta_{\rm tr}$) that can be larger or smaller than the CSI dimension ($ MN$). Most previous works have assumed training with a number of pilots \textit{larger} than the channel dimension, which is impractical in massive MIMO where the CSI dimension potentially exceeds the dimension of the coherence block.
		\item We provide optimal information-theoretic bounds on the achievable CSIT estimation error and the quality scaling exponent in terms of training and feedback dimensions. To the best of our knowledge, such analysis for massive MIMO channels with spatial correlation has not been considered before.
		\item We show the effectiveness of AF without the need for either channel covariance knowledge, or sophisticated estimation and quantization at the UE side.
	\end{enumerate}

	The rest of the paper is organized as follows. In Section \ref{sec:ch_training} we describe the model, training method and error metrics. In Section \ref{sec:RD_LB} we derive a lower bound on CSIT estimation error via rate-distortion theory and we explain feedback via ECSQ. In Section \ref{sec:analog_feedback} we discuss AF and derive an expression for the quality scaling exponent it can achieve. Finally, Section \ref{sec:numerical_results} concludes the paper with numerical simulation results.

	\textbf{Notation:} We denote scalars, vectors, and matrices by small, small bold-face and capital bold-face letters $x,\, \xv, \Xm$, respectively. For a positive integer $N$, we define $[N]\triangleq \{ 1,\ldots,N\}$. Superscripts $(\cdot)^\transp$ and $(\cdot)^\herm$ denote transpose and Hermitian transpose, respectively. The function $\mathbf{1}_{\Xc} (x)$ is the indicator function such that $\mathbf{1}_{\Xc} (x)=1$ if $x\in \Xc$ and $\mathbf{1}_{\Xc}(x)=0$ if $x\notin \Xc$. For a real-valued functions $f(\cdot)$ and $g(\cdot)$, defined over all positive reals $x>0$, we say $f(x)=\Theta (g(x))$ if $\exists A_1,A_2>0$ such that $A_1 g(x) \le f(x)\le A_2 g(x)$ for all $x\ge x_0$.
	
	\section{Channel Training}\label{sec:ch_training}
	We consider a broadcast MIMO OFDM system, consisting of a BS with an array of $M$ antennas and $K\le M$ single-antenna user equipment (UEs). The frequency-domain signal corresponding to subcarrier $n$ received by an arbitrary UE can be expressed as $y[n] =\hvt^\herm [n]  \xv [n] + z[n],$ where $\hvt [n]  \in \bC^M$ contains the channel fading coefficients between the BS and the UE at subcarrier $n$, $\xv[n]  \in \bC^M$ is the transmit signal satisfying the power constraint $\bE [\Vert \xv [n]\Vert^2]\le \snrdl $ for all $n$, where $\snrdl$ denotes SNR in the DL, while $z\sim \cg (0,1)$ is additive white Gaussian noise (AWGN).\footnote{This definition of transmit power and noise variance simplifies notation, since we only need the ratio of the two, i.e. the SNR.} We concatenate the channel over all subcarriers in a vector $\hv = [ \hvt[1]^\transp,\ldots, \hvt[N]^\transp ]^\transp \in \bC^{MN}$, which we refer to as the channel state information at the transmitter (CSIT). We assume that $\hv$ evolves according to a block-fading model in which it is constant over frames of length $T$ and changes from frame to frame according to an ergodic stationary, spatially correlated, zero-mean Gaussian process, i.e. $\hv\sim \cg (\mathbf{0},\Sigmam^h)$ where $\Sigmam^{h} = \bE [\hv \hv^{\herm}]$ is the channel covariance of rank $r = {\rm rank}(\Sigmam^h)$. 
	
	Throughout this work we assume that UEs have perfect estimates of their channels (perfect CSIR). We then focus on CSIT acquisition by the BS via the following process. Given a set $\Nc_p\subseteq [N]$ of $N_p$ pilot subcarriers, the BS broadcasts a sequence of $T_p \le T$ training vectors per pilot subcarrier. The training measurements received at the UE can be written as
	\begin{equation}\label{eq:subcar_measurements}
		\yvt^{\rm tr}[n] = \hvt^\herm [n] \widetilde{\Xm}^{\rm tr}  [n]+\widetilde{\zv}^{\rm tr} [n],~n\in \Nc_p,
	\end{equation}
	where $\widetilde{\Xm}^{\rm tr}[n] = \left[\widetilde{\xv}_1[n],\ldots,\widetilde{\xv}_{T_p}[n]\right]\in \bC^{M\times T_p}$ is a matrix containing the $T_p$ pilot vectors as its columns. The training dimension $\beta_{\rm tr} = T_p N_p$ denotes the total number of dimensions dedicated to training in a time-frequency block of dimension $TN$. The $\beta_{\rm tr}$ training measurements at the UE can be represented by a single vector $\yv^{\rm tr} = [\yvt [n_1], \ldots,\yvt [n_{N_p}]]\in \bC^{1\times \beta_{\rm tr}}$, where using \eqref{eq:subcar_measurements} we have
	\begin{equation}\label{eq:training_signal}
		\yv^{\rm tr} =\hv^\herm  \Xm^{\rm tr} + \zv^{\rm tr},
	\end{equation}
	where $\Xm^{\rm tr} = \left( \Bm_{n,\ell} \right)_{n\in [N]}^{\ell\in [N_p]}\in \bC^{MN\times \beta_{\rm tr}}$ is the \textit{training matrix} consisting of $M\times T_p$ blocks $\Bm_{n,\ell}$, where $\Bm_{n,\ell} = \widetilde{\Xm}^{\rm tr}[n_{\ell}]$ if $n_\ell \in \Nc_p$ and $\Bm_{n,\ell} = \mathbf{0}$ if $n_\ell \notin \Nc_p$.  The training matrix can be designed in several ways, for example by optimizing various performance criteria based on channel covariance knowledge \cite{kotecha2004transmit,jiang2015achievable}. However, we consider a simpler, and therefore practically more available design, in which pilot vectors are randomly and independently generated according to an isotropic Gaussian distribution,
	\begin{equation}\label{eq:isotropic_pilots}
		\widetilde{\xv}_i^{\rm tr}[n] \sim \cg (\mathbf{0},\frac{\snrdl}{M}\mathbf{I}),~i\in [T_p], \, n\in \Nc_p,
	\end{equation}
	where we verify that this design satisfies the transmission power constraint $\bE[ \Vert \widetilde{\xv}_i^{\rm tr}[n] \Vert^2]\le \snrdl$. In other words, the elements of $\Xm^{\rm tr}$ that are not identically zero, are generated as $\cg (0,\snrdl/M)$ Gaussian random variables.

	After receiving pilot symbols, the UE computes a message containing information about the channel state and sends it to the BS via $\beta_{\rm fb}$ uses of the UL channel. Given the feedback channel output, the BS computes an estimate $\widehat{\hv}$ of the CSIT. We consider the mean squared error (MSE)
	\begin{equation}\label{eq:err_def}
		d(\hv,\widehat{\hv}) = \bE \left[ \Vert \hv - \widehat{\hv}\Vert^2 \right]
	\end{equation}
	as the error (distortion) metric between true and estimated CSIT. For a fixed tuple $(\beta_{\rm tr},\beta_{\rm fb},\snrdl)$ and a given realization of $\Xm^{\rm tr}$, we say that an error $D$ is achievable if $d(\hv,\widehat{\hv})\le D$. Accordingly, we say that a quality scaling exponent of $\alpha$ is achievable if $d(\hv,\widehat{\hv}) = \Theta (\snrdl^{-\alpha})$, with the exponent indicating how fast the error decays with SNR \cite{joudeh2016sum,jindal2006mimo}. Note that an error decay of $ \Theta (\snrdl^{-\alpha})$ in estimating the wideband channel implies an error decay in estimating the channel over each subcarrier that is at least as fast. In other words, if $d(\hv,\widehat{\hv}) = \Theta (\snrdl^{-\alpha})$, then $d(\widetilde{\hv}[n],\widehat{\widetilde{\hv}}[n])  = O (\snrdl^{-\alpha})$, where $\widehat{\widetilde{\hv}}[n]$ is the estimate of $\widetilde{\hv}[n]$ (the channel over subcarrier $n$). 
	
	The quality scaling exponent is related to the system DoF as follows. It is known that for a multi-user system with $K\le M$ UEs, if for some $\alpha \in [0,1]$ the MSE in estimating the CSIT decays as $O(\snrdl^{-\alpha})$, then ZF precoding achieves a total DoF of $K\alpha$, with $\alpha = 1$ corresponding to the full DoF \cite{davoodi2016aligned}. One can achieve a slightly improved DoF of $1+(K-1)\alpha$ with rate-splitting \cite{joudeh2016sum}, which is coincidentally also an upper-bound, i.e. no scheme can achieve a higher DoF. It is therefore apparent that, when $\alpha=0$, we can achieve a maximum DoF of 1 with rate-splitting, whereas $\alpha = 1$ yields a full DoF of $K$.
	
	\subsection*{The Feedback Channel}In what follows we model the UL as a MIMO-MAC channel where all the UEs send their feedback simultaneously to the BS. We assume for simplicity that the BS has perfect knowledge of the UL channel and we use the high-SNR capacity formula $C^{\rm ul} = \log (1+M\snrul)$ with a so-called diversity-multiplexing trade-off  factor of one \cite{caire2007multiuser}. We assume the SNR in UL to be proportional to the DL SNR as $\snrul = \kappa\, \snrdl, $
	with $\kappa>0$ being a positive constant. The modeling of the feedback link as such was considered in \cite{jiang2015achievable} and is a simplifying assumption that allows for a meaningful and elegant development of the theory but is not fundamental, in the sense that one can obtain similar results by considering other feedback channel models. 
	
	\section{Lower bound on CSIT estimation error via rate-distortion theory}\label{sec:RD_LB}
	To derive a lower-bound on the MSE in estimating the channel for given training and feedback dimensions, we think of the UE as an \textit{encoder} that aims at encoding a \textit{source} that produces channel realizations $\hv$, given the training noisy linear measurements of those realizations as in \eqref{eq:training_signal}. Since the encoder does not have direct access to the source output, this problem is an example of \textit{remote source coding} \cite{berger1971rate}. Following standard information theoretic notation, we can formulate this problem by modeling the source as a stationary sequence of vector symbols $\hb^{\Delta} = \{ \hv^{(i)} \}_{i=1}^{\Delta} $ with distribution $\hv^{(i)}\sim \cg (\mathbf{0},\Sigmam^h)$, where $\hv^{(i)}$ denotes the channel in frame $i$ and $\Delta$ is the total number of frames. The UE observes a sequence of measurements of the source as $\yb^{{\rm tr}\, \Delta}=\{ \yv^{{\rm tr}\, (i)} \}_{i=1}^{\Delta}$, where from \eqref{eq:training_signal} we have $\yv^{{\rm tr}\, (i)} = \hv^{(i)\, \herm}\Xm^{\rm tr} + \zv^{{\rm tr}\, (i)}$. Because we seek a lower-bound on the achievable error, we can assume that the UE knows the channel statistics and encodes the source given measurements over infinitely many blocks. Now, a $(2^{\Delta R},\Delta)$ remote rate-distortion code consists of a sequence of encoding $ f_{\Delta}\,: \yb^{{\rm tr}\, \Delta}\to [2^{\Delta R}],$
	and decoding $  g_{\Delta}\,: [2^{\Delta R}]\to \widehat{\hb}^\Delta $ functions, where $R\ge 0$ denotes the code rate and where $\widehat{\hb}^{\Delta} = \{ \widehat{\hv}^{(i)}\}_{i=1}^{\Delta}$ is the sequence of channel estimates. A remote rate-distortion pair $(R,D)$ is said to be achievable if there exists a sequence of $(2^{\Delta R},\Delta)$ codes such that $\underset{\Delta \to \infty}{\lim}  \frac{1}{\Delta} \sum_{i=1}^{\Delta} d(\hv^{(i)},\widehat{\hv}^{(i)})\le D$, where $d(\cdot,\cdot)$ is defined in \eqref{eq:err_def}. The closure of all such pairs is the rate-distortion region.
	
	\textbf{Remark:} Note that here the encoder is only required to yield an average error less than $D$ over all frames. This is a weaker condition in comparison to the per-frame achievable error defined in the previous section as the condition $d(\hv,\widehat{\hv})\le D$, in the sense that, if a feedback strategy achieves the latter, it also achieves the former. It follows that, for the same training and feedback rates, the achievable error with these assumptions serves as a strict lower-bound for the achievable error of all feedback strategies that operate over finite blocks of the source, in which the UE has no access to channel statistics, and consider the stronger notion of per-frame achievable error as defined in the previous section. 
	
	 The remote rate-distortion function $R_{\hv}^r(D)$ is the infimum of rates $R$ such that $(R,D)$ is in the rate-distortion region for given $D$, and shows the infimum number of bits needed to quantize a single channel vector to achieve an error of $D$. We derive an expression for the remote rate-distortion function via the following lemma.
	\begin{lemma}[remote rate-distortion function]\label{lem:rate_distortion}
		For a fixed realization of the training matrix $\Xm^{\rm tr}$, let $\Sigmam^u = \Sigmam^h \Xm^{\rm tr}\left(  \Xm^{{\rm tr}\, \herm} \Sigmam^h \Xm^{{\rm tr}} + \mathbf{I} \right)^{-1}\Xm^{{\rm tr}\, \herm} \Sigmam^h$ denote the covariance of the posterior mean of the channel given pilot measurements $\uv^{(i)} = \bE[\hv^{(i)}|\yv^{{\rm tr}\, (i)} ]$ and denote its eigenvalues by $\{\lambda_{\ell}^{u}\}_{\ell=1}^{MN}$. The remote rate-distortion function is given by 
		\begin{equation}\label{eq:rrd_expression}
			R_{\hv}^r (D) = \sum_{\ell=1}^{MN}\left[ \log \frac{\lambda_{\ell}^{u}}{\gamma} \right]_+,~\text{for}~ D\ge D_{\rm mmse}
		\end{equation}
		where $\gamma$ is chosen such that $\sum_{\ell=1}^{MN}\min \{ \gamma, \lambda_{\ell}^{u} \} = D-D_{\rm mmse},$ and where $D_{\rm mmse} = d( \hv^{(i)} , \uv^{(i)} )$.
	\end{lemma}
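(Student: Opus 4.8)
The plan is to reduce this indirect (``remote'') rate--distortion problem to an \emph{ordinary} vector rate--distortion problem for the Gaussian ``cleaned'' source $\uv = \bE[\hv \mid \yv^{\rm tr}]$, and then to apply the classical reverse water-filling formula. By the remote source coding theorem, $R_{\hv}^r(D)$ has the single-letter form $\inf I(\yv^{\rm tr};\widehat{\hv})$, the infimum being over conditional laws $p(\widehat{\hv}\mid\yv^{\rm tr})$ with $\bE[\|\hv-\widehat{\hv}\|^2]\le D$; in particular $\hv - \yv^{\rm tr} - \widehat{\hv}$ forms a Markov chain.

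\textbf{Step 1 (orthogonality).} First I would write $\hv = \uv + \ev$ with $\ev \eqdef \hv - \uv$. Since the fixed realization of $\Xm^{\rm tr}$ makes $(\hv,\yv^{\rm tr})$ jointly Gaussian, $\uv$ is the linear MMSE estimate, so $\ev$ is zero-mean and \emph{independent} of $\yv^{\rm tr}$ (not merely uncorrelated), hence independent of any $\widehat{\hv}$ produced from $\yv^{\rm tr}$ by the code; moreover $\uv$ is a deterministic linear function of $\yv^{\rm tr}$. The cross term vanishes in conditional expectation because $\ev$ and $\uv-\widehat{\hv}$ are conditionally independent given $\yv^{\rm tr}$ and $\bE[\ev\mid\yv^{\rm tr}]=0$, so
\[
  \bE\big[\|\hv-\widehat{\hv}\|^2\big] = \bE\big[\|\ev\|^2\big] + \bE\big[\|\uv-\widehat{\hv}\|^2\big] = D_{\rm mmse} + \bE\big[\|\uv-\widehat{\hv}\|^2\big].
\]

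\textbf{Step 2 (reduction).} Consequently the constraint $\bE[\|\hv-\widehat{\hv}\|^2]\le D$ is equivalent to $\bE[\|\uv-\widehat{\hv}\|^2]\le D-D_{\rm mmse}$, which is feasible precisely when $D\ge D_{\rm mmse}$. Because $\uv$ is a function of $\yv^{\rm tr}$, the data-processing inequality gives $I(\yv^{\rm tr};\widehat{\hv})\ge I(\uv;\widehat{\hv})$, hence $R_{\hv}^r(D) \ge R_{\uv}(D-D_{\rm mmse})$, the ordinary rate--distortion function of $\uv$ under MSE. For the matching upper bound, any test channel $p(\widehat{\hv}\mid\uv)$ meeting the relaxed constraint can be used in the remote problem by precomposing with the map $\yv^{\rm tr}\mapsto\uv$; then $\widehat{\hv} - \uv - \yv^{\rm tr}$ is Markov and $\uv$ is determined by $\yv^{\rm tr}$, so $I(\yv^{\rm tr};\widehat{\hv}) = I(\uv;\widehat{\hv})$. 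Therefore $R_{\hv}^r(D) = R_{\uv}(D-D_{\rm mmse})$.

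\textbf{Step 3 (reverse water-filling).} A routine linear-MMSE computation on $\yv^{{\rm tr}\,\herm} = \Xm^{{\rm tr}\,\herm}\hv + \zv^{{\rm tr}\,\herm}$ gives $\uv = \Sigmam^h\Xm^{\rm tr}(\Xm^{{\rm tr}\,\herm}\Sigmam^h\Xm^{\rm tr}+\Id)^{-1}\yv^{{\rm tr}\,\herm}$, whence $\uv \sim \cg(\mathbf{0},\Sigmam^u)$ with $\Sigmam^u$ exactly as in the statement and $D_{\rm mmse} = d(\hv,\uv) = \trace(\Sigmam^h - \Sigmam^u)$. Diagonalizing $\Sigmam^u$ decouples $\uv$ into independent circularly-symmetric complex Gaussian coordinates of variances $\lambda_\ell^u$; the optimal distortion allocation $D_\ell = \min\{\gamma,\lambda_\ell^u\}$ with $\sum_\ell D_\ell = D - D_{\rm mmse}$ (reverse water-filling, obtained from a Lagrangian argument) yields per-coordinate rate $[\log(\lambda_\ell^u/\gamma)]_+$, and summing gives \eqref{eq:rrd_expression}; coordinates with $\lambda_\ell^u = 0$ require zero rate and zero distortion.

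\textbf{Main obstacle.} The delicate part is Steps 1--2: justifying, for \emph{arbitrary} (block, possibly stochastic) codes, that the MMSE error $\ev$ is independent of the reconstruction $\widehat{\hv}$, and that the remote optimization depends on $\yv^{\rm tr}$ only through the sufficient statistic $\uv$, so that it collapses exactly onto the Gaussian vector rate--distortion problem. Step 3 is then standard; the only bookkeeping is the complex- versus real-Gaussian normalization, which is what leaves the formula in the stated form (with no spurious factor of $\tfrac12$).
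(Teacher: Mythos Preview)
Your proposal is correct and follows the same approach as the paper: reduce the remote problem to the ordinary rate--distortion function of the MMSE estimate $\uv$ via the identity $R_{\hv}^r(D)=R_{\uv}(D-D_{\rm mmse})$, then apply reverse water-filling to the Gaussian vector source. The paper handles the reduction by citing \cite{eswaran2019remote}, whereas you supply the sufficiency/data-processing argument explicitly; otherwise the structure is identical.
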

	\begin{proof}
		See Appendix \ref{app:RD_lemma_proof}.
	\end{proof}
Accordingly, we can define the remote distortion-rate function $D_{\hv}^r (R)$ as the infimum of errors $D$ such that $(R,D)$ is in the rate-distortion region for given $R$ and is equivalent to the inverse of $R_{\hv}^r(\cdot)$ \cite{cover2006elements}. 

With rate-distortion feedback strategy, the UE remotely encodes the channel using $R_{\hv}^r(D)$ bits and sends the quantization index in the UL via a channel code. A direct application of the source-channel separation theorem with distortion (see \cite{cover2006elements}, exercise 10.17) yields that an end-to-end error of $D$ between the channel vector and its estimate at the BS via feedback over a channel of capacity $C^{\rm ul}$ is achievable if and only if $\beta_{\rm fb}C^{\rm ul} >R_{\hv}^r(D)$. Thereby for a given feedback dimension $\beta_{\rm fb}$, we can achieve an error of 
	\begin{equation}\label{eq:distor_bound}
		D > D_{\hv}^r ( \beta_{\rm fb} C^{\rm ul} )
	\end{equation}
	Therefore, for given $\beta_{\rm fb}$ and $\beta_{\rm tr}$, the lower-bound on the achievable error is $D_{\hv}^r ( \beta_{\rm fb} C^{\rm ul} )$, where the dependence on $\beta_{\rm tr}$ is implicit in the expression for the rate-distortion function $D_{\hv}^r$. It seems very difficult to make this relation more explicit in the general case, but we make it  explicit for the large SNR regime via the following theorem. 
\begin{theorem}\label{thm:rate_asymp}
	The rate-distortion feedback strategy achieves a CSIT estimation error of $\Theta (\snrdl^{-\alpha_{\rm rd}})$ with probability one over the realizations of the training matrix $\Xm^{\rm tr}$, where
	\begin{equation}\label{eq:decay_exponent_ub}
		\alpha_{\rm rd}= \min (\beta_{\rm fb}/r,1)\mathbf{1}_{ [r,\infty) }(\beta_{\rm tr}) .
	\end{equation}
	is the quality scaling exponent.
\end{theorem}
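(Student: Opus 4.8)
The plan is to show that the infimal error the rate-distortion strategy can achieve, $D^\star(\snrdl)\eqdef D_{\hv}^r\big(\beta_{\rm fb}C^{\rm ul}\big)$ with $C^{\rm ul}=\log(1+M\kappa\snrdl)=\log\snrdl+\Theta(1)$, obeys $D^\star(\snrdl)=\Theta(\snrdl^{-\alpha_{\rm rd}})$ for almost every realization of $\Xm^{\rm tr}$. The first step is to invert the reverse water-filling description of Lemma~\ref{lem:rate_distortion} in the large-rate regime. For $\beta_{\rm fb}\ge1$ the available rate $\beta_{\rm fb}C^{\rm ul}\to\infty$ (the case $\beta_{\rm fb}=0$ is trivial: $D^\star=D_{\hv}^r(0)=\tr(\Sigmam^h)$, $\alpha_{\rm rd}=0$), so the water level $\gamma$ eventually lies below every nonzero eigenvalue $\lambda_\ell^u$ of $\Sigmam^u$, and the formula collapses to the closed form
\[
D_{\hv}^r(R)\;=\;D_{\rm mmse}\;+\;r_u\Big({\textstyle\prod_{\ell:\,\lambda_\ell^u>0}\lambda_\ell^u}\Big)^{1/r_u}\exp(-R/r_u),
\]
valid for all large $R$, where $r_u=\rank(\Sigmam^u)\le r$ and the base of the exponential matches that of the logarithm (immaterial for the scaling). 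Everything then reduces to tracking the $\snrdl$-behaviour of $r_u$, of the geometric mean of the nonzero $\lambda_\ell^u$, and of $D_{\rm mmse}$.

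The second, and main, step is a high-SNR analysis of $\Sigmam^u=\Sigmam^h\Xm^{\rm tr}(\Xm^{\rm tr\,\herm}\Sigmam^h\Xm^{\rm tr}+\mathbf{I})^{-1}\Xm^{\rm tr\,\herm}\Sigmam^h$ and of $D_{\rm mmse}=\tr(\Sigmam^h-\Sigmam^u)$. Writing $\Xm^{\rm tr}=\sqrt{\snrdl/M}\,\Gm$ with $\Gm$ carrying i.i.d.\ Gaussian entries in its (block-sparse) nonzero positions, and restricting to $\mathrm{range}(\Sigmam^h)$ via an eigendecomposition $\Sigmam^h=\Um\Lambdam\Um^\herm$, the MMSE error covariance on that subspace is the standard linear-Gaussian expression $(\Lambdam^{-1}+(\snrdl/M)\Wm)^{-1}$ with $\Wm=\Um^\herm\Gm\Gm^\herm\Um$; in particular $D_{\rm mmse}(\snrdl)=\tr\big((\Lambdam^{-1}+(\snrdl/M)\Wm)^{-1}\big)$, which is Loewner-monotone decreasing in $\snrdl$. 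The operative genericity fact is $\rank(\Wm)=\min(\beta_{\rm tr},r)$ almost surely over $\Gm$, valid under a mild non-degeneracy condition on the pilot placement (e.g.\ the channel covariance restricted to the pilot subcarriers retains rank $r$, with $T_p\le M$); this is what produces the dichotomy in $\alpha_{\rm rd}$. When $\beta_{\rm tr}<r$, $\Wm$ is singular: the part of the error covariance supported on $\ker\Wm$ does not vanish as $\snrdl\to\infty$ (its trace tends to $\tr(C^{-1})>0$, with $C$ the principal block of $\Lambdam^{-1}$ on $\ker\Wm$), so $D_{\rm mmse}(\snrdl)$ stays bounded away from $0$; combined with $D_{\rm mmse}\le D_{\hv}^r(R)\le D_{\hv}^r(0)=\tr(\Sigmam^h)$, this gives $D^\star(\snrdl)=\Theta(1)=\Theta(\snrdl^{0})$, which matches $\alpha_{\rm rd}=0$ since $\mathbf{1}_{[r,\infty)}(\beta_{\rm tr})=0$.

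When $\beta_{\rm tr}\ge r$, instead $\Wm\succ0$, so the error covariance $\Sigmam^h-\Sigmam^u(\snrdl)$ is Loewner-monotone decreasing to $\mathbf{0}$; hence $r_u=r$, the nonzero eigenvalues satisfy $\lambda_\ell^u(\snrdl)\to\lambda_\ell^h>0$ (so $\lambda_\ell^u=\Theta(1)$ and $\prod_{\ell=1}^r\lambda_\ell^u=\Theta(1)$), and $D_{\rm mmse}(\snrdl)=\Theta(\snrdl^{-1})$ with leading term $(M/\snrdl)\,\tr(\Wm^{-1})$. Substituting these, together with $\exp(-\beta_{\rm fb}C^{\rm ul}/r)=(1+M\kappa\snrdl)^{-\beta_{\rm fb}/r}=\Theta(\snrdl^{-\beta_{\rm fb}/r})$, into the closed form gives
\[
D^\star(\snrdl)\;=\;\Theta(\snrdl^{-1})+\Theta(\snrdl^{-\beta_{\rm fb}/r})\;=\;\Theta(\snrdl^{-\min(1,\beta_{\rm fb}/r)}),
\]
which is $\Theta(\snrdl^{-\alpha_{\rm rd}})$ because $\mathbf{1}_{[r,\infty)}(\beta_{\rm tr})=1$; the last equality is the elementary fact $\snrdl^{-a}+\snrdl^{-b}=\Theta(\snrdl^{-\min(a,b)})$. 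The routine parts are the algebra of inverting the water-filling allocation and this identity; the genuine obstacle is the high-SNR control of $\Sigmam^u$ --- pinning down the almost-sure rank $\min(\beta_{\rm tr},r)$ of $\Wm$ for the block-sparse pilot matrix, proving the Loewner-monotone convergence with the correct limiting rank, and extracting the sharp orders $\lambda_\ell^u=\Theta(1)$ and $D_{\rm mmse}=\Theta(\snrdl^{-1})$ (resp.\ $\Theta(1)$ when $\beta_{\rm tr}<r$) uniformly enough to feed the $\Theta(\cdot)$ bookkeeping for all $\snrdl\ge\snrdl_0$.
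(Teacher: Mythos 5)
Your proposal is correct and follows essentially the same route as the paper's proof: invert the reverse water-filling at high rate to get $D_{\hv}^r(R)=D_{\rm mmse}+r\bigl(\prod_{\ell}\lambda_{\ell}^u\bigr)^{1/r}2^{-R/r}$, establish $D_{\rm mmse}=\Theta(\snrdl^{-1})$ when $\beta_{\rm tr}\ge r$ (resp.\ $\Theta(1)$ when $\beta_{\rm tr}<r$) via the almost-sure rank $\min(\beta_{\rm tr},r)$ of the effective Gram matrix, note $\lambda_{\ell}^u=\Theta(1)$, and combine with $C^{\rm ul}=\log(1+M\kappa\,\snrdl)$ and source--channel separation. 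Your Loewner-monotonicity phrasing of the MMSE error covariance $(\Lambdam^{-1}+(\snrdl/M)\Wm)^{-1}$ and your explicit non-degeneracy caveat on the pilot placement are minor (indeed slightly more careful) reworkings of the paper's Part I/Part II argument, not a different approach.
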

	\begin{proof}
		See Appendix \ref{app:rate_asymp_thm_proof}.
	\end{proof}
	Note that in this theorem $r = {\rm rank}(\Sigmam^h)$ is the channel covariance rank. The quality scaling exponent of the rate-distortion quantizer is an upper-bound on the quality scaling exponent of all feedback strategies and is illustrated as a heat map in Fig. \ref{fig:quality_scaling_exponent} (left).  The resulting system DoF with rate-splitting in this case is given by 
	\begin{equation}\label{eq:DoF_rd}
		\text{DoF}_{\rm rd}   = 1 + \mathbf{1}_{ [r,\infty) }(\beta_{\rm tr})\min (\beta_{\rm fb}/r,1)(K-1).
	\end{equation}

		\begin{figure*}[t]
		\centerline{\includegraphics[scale=0.6]{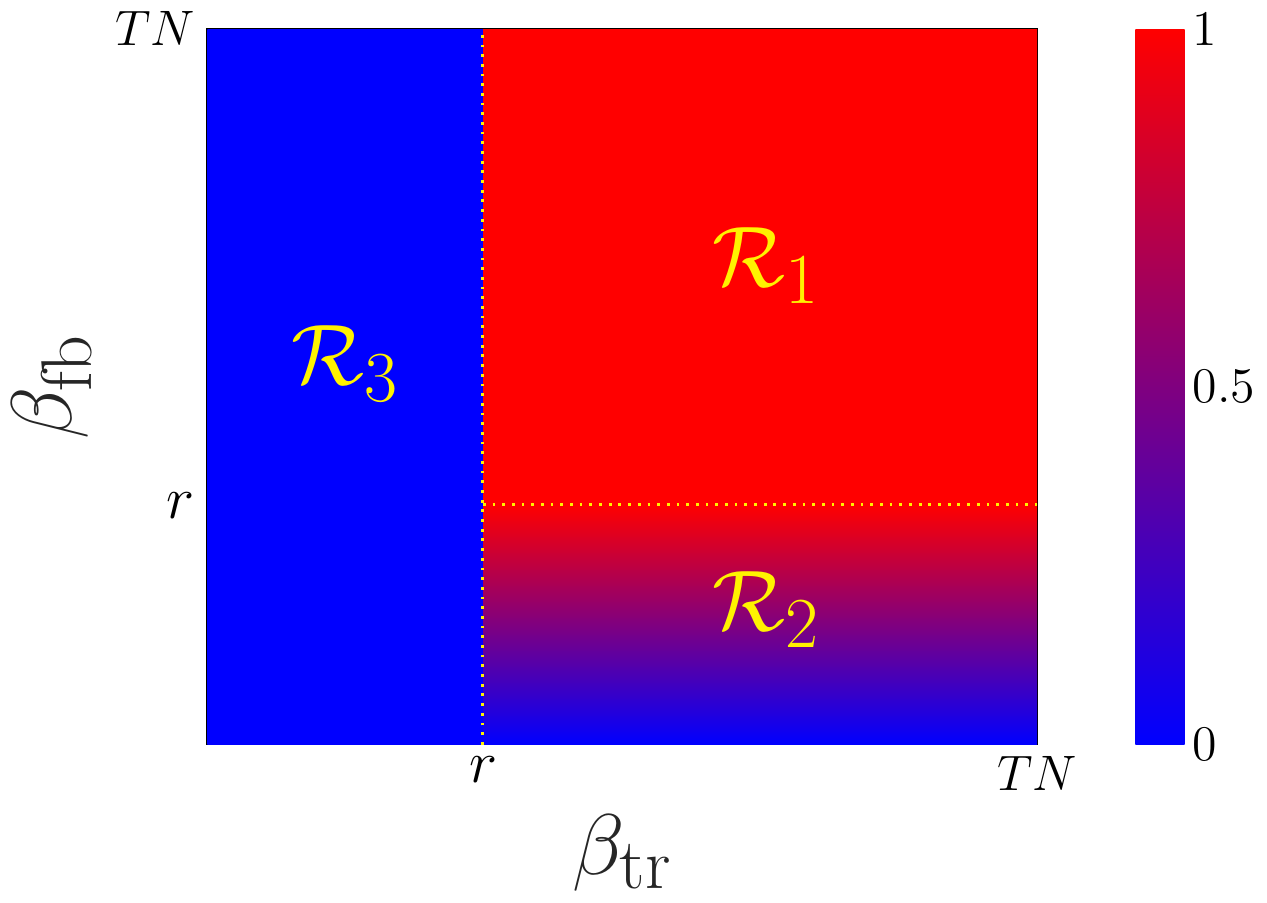}
			\includegraphics[scale=0.6]{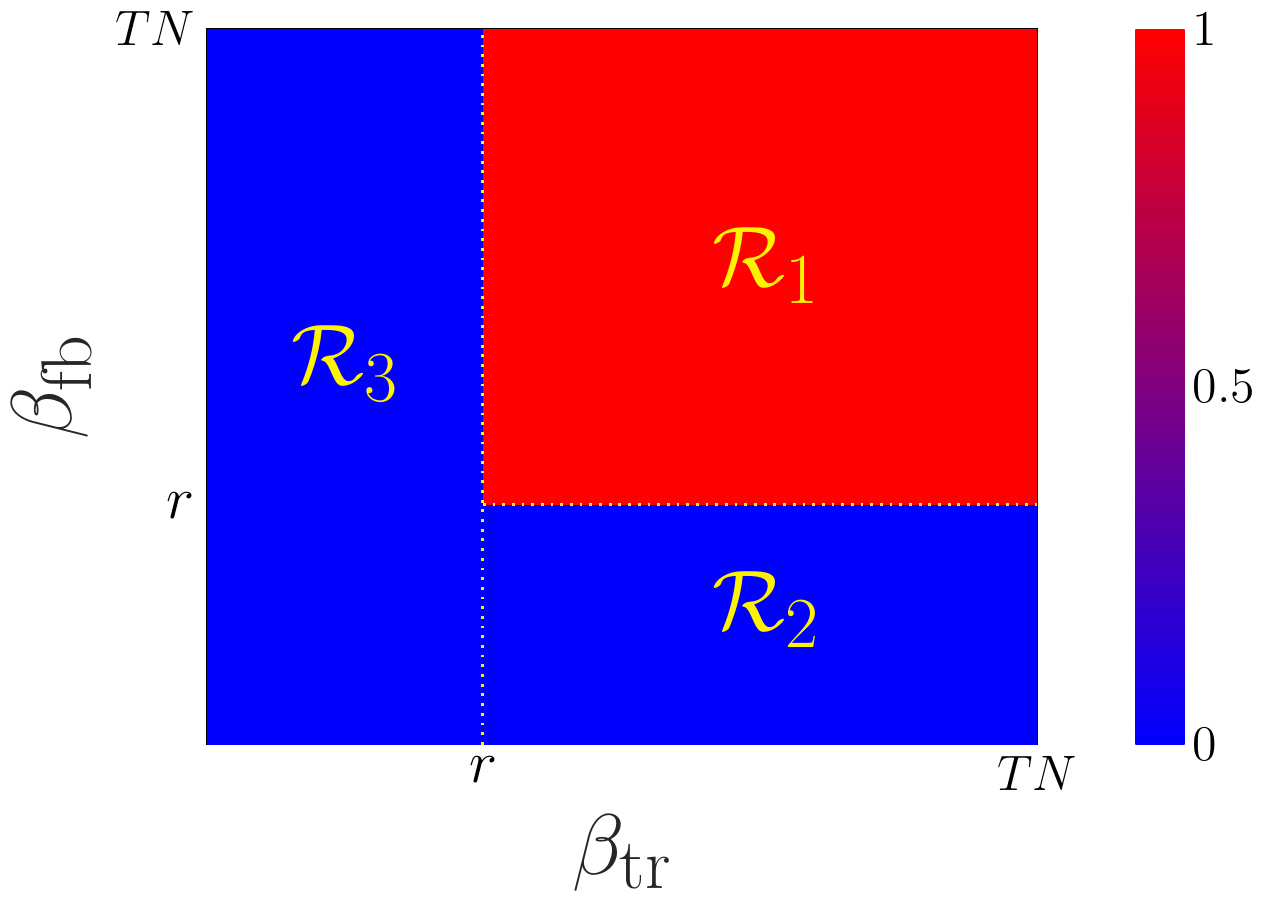}}
		\caption{The quality scaling exponent as a function of $(\beta_{\rm tr},\beta_{\rm fb})$, represented as a heat-map. The left figure corresponds to the optimal rate-distortion feedback $\alpha_{\rm rd}$, whereas the right figure corresponds to AF $\alpha_{\rm af}$. Three regions for training and feedback dimension parameters are distinguished by $\Rc_1$, $\Rc_2$ and $\Rc_3$. In regions $\Rc_1$ and $\Rc_3$ the quality scaling exponent of AF coincides with that of the rate-distortion feedback, whereas in region $\Rc_2$ it is strictly sub-optimal ($\alpha_{\rm af}=0$ vs $\alpha_{\rm rd} = \beta_{\rm fb}/r$).}
		\label{fig:quality_scaling_exponent}
	\end{figure*}
	\subsection*{Entropy-Coded Scalar Quantization and Feedback}
	The rate-distortion bound is achieved by vector quantization applied to a large block of MMSE channel estimates $\{\uv^{(i)} \}_{i=1}^{\Delta}$. This is impractical, given the high channel and block dimensions and the notorious difficulty of designing optimal vector quantizers. Therefore, given channel covariance knowledge, to produce the feedback message the UE can employ a much simpler \textit{entropy-coded scalar quantizer} (ECSQ) as follows. Given the training vector, the UE computes the MMSE channel estimate $\uv = \bE [\hv | \yv^{\rm tr}]$. The Karhunen-Lo{\`e}ve (KL) expansion of $\uv$ can be written as $\uv = \sum_{\ell=1}^{MN} w_\ell \gv_\ell$, where $\gv_\ell$ are eigenvectors of the covariance $\Sigmam^u$ and $w_{\ell} \sim \cg(0,\lambda_{ \ell}^u)$ are complex Gaussian coefficients with variance $\lambda_{ \ell}^u$, equivalent to the eigenvalues of $\Sigmam^u$. The idea is to quantize the vector of coefficients $\wv = [w_1,\ldots,w_{MN}]^\transp$, component by component, so as to achieve an error $d(\uv ,\widehat{\uv}) = \bE [\Vert \uv - \widehat{\uv} \Vert^2 ]  = \bE [\Vert \wv - \widehat{\wv} \Vert^2 ] \le D- D_{\rm mmse}$. Inspired by the \textit{reverse water filling} formulation used to derive the remote rate-distortion function in \eqref{eq:rrd_expression} (see proof of Lemma \ref{lem:rate_distortion}), the UE only quantizes those coefficients $w_{\ell}$ for which $\lambda_{ \ell}^u \ge \gamma$, where $\gamma$ is chosen such that $\sum_{\ell=1}^{MN}\min \{ \gamma, \lambda_{\ell}^{u} \} = D-D_{\rm mmse}$. With this choice, the error of quantizing each coefficient is given by $E[|w_{\ell}-\widehat{w}_{\ell}|^2]= \min(\lambda_{ \ell}^u,\gamma) $ and the associated rate is $ [\log\frac{\lambda_{ \ell}^u}{\gamma} ]_+$ bits. Now, instead of using the optimal rate-distortion vector quantizer, the UE can use a \textit{dithered} scalar quantizer to encode coefficients with variance above $\gamma$. Using a classic result from Ziv \cite{ziv1985universal}, one can show that this quantizer achieves an error $\gamma $ with a number of bits $ b_{\ell}=  \log\frac{\lambda_{ \ell}^u}{\gamma} + 1.508.$ No bits are assigned to coefficients for which $\lambda_{ \ell}^u<\gamma$. Since both the UE and the BS know the channel covariance, there is no need to encode the position of quantized coefficients. Therefore, the total number of feedback bits to achieve an error $D$ with ECSQ can be computed as 
	\begin{equation}
		R_{\rm scalar} (D) =\sum_{\ell : \lambda_{ \ell}^u\ge \gamma}\left( \log\frac{\lambda_{ \ell}^u}{\gamma} + 1.508\right).
	\end{equation}	
	ECSQ can be seen as a practical quantizer that can have a performance close to optimal (in the sense of achievable error), when the UE has access to the channel covariance knowledge and at the price of higher feedback rate.

	\section{Analog Feedback}\label{sec:analog_feedback}
	In analog feedback the UE extracts its $\beta_{\rm tr}$ received DL pilot symbols $\yv^{\rm tr}$ from the DL training and feeds them back to the BS via Quadrature Amplitude Modulation (QAM) symbols with unquantized I and Q components via $\beta_{\rm fb} = \zeta \beta_{\rm tr}$ channel symbols \cite{samardzija2006unquantized,thomas2005obtaining}. In particular, the training vector is modulated by a full-rank matrix $\Psim$ of dimension $\beta_{\rm tr} \times \beta_{\rm fb}$, known to both UE and BS. The received feedback at the BS is given by 
	\begin{equation}\label{eq:af_bs_signal}
		\yv^{\rm af} = \yv^{\rm tr} \Psim + \widetilde{\zv} = \hv^\herm \Xm^{\rm tr} \Psim +\zv^{\rm af},
	\end{equation}
	where $\widetilde{\zv}\sim \cg (\mathbf{0},\mathbf{I})$ is the AWGN over the feedback channel and $\zv^{\rm af} = \zv^{\rm tr}\Psim +\widetilde{\zv}$. The scalar $\zeta=\beta_{\rm fb}/\beta_{\rm tr}$ denotes the number of feedback channel uses per training coefficient. It is customary to choose $\zeta \ge 1$ (so that each training symbol is fed back at least once) and to select $\Psim$ to be a unitary ``spreading" matrix ($\Psim \Psim^\herm = \mathbf{I}$) \cite{marzetta2006fast,caire2007multiuser}. However, we allow $\zeta$ to be any positive value to keep the generality of the problem. Therefore $\Psim$ is only required to be full-rank (not necessarily unitary), and we nevertheless call it the spreading matrix for simplicity. Considering the MIMO-MAC capacity formula $C^{\rm ul} = \log (1+M\snrul)$ and from the feedback model \eqref{eq:af_bs_signal}, the feedback channel input has to satisfy the per-symbol average power constraint $\bE[|x|^2]\le M \snrul$. Therefore, the $i$-th column of the spreading matrix $\psiv_i$ is chosen such that
	\begin{equation}\label{eq:psi_vecs}
		\psiv_i^\herm \Sigmam_{\yv^{\rm tr}} \psiv_i = M \snrul, ~ i \in [\beta_{\rm fb}],
	\end{equation}
	where $\Sigmam_{\yv^{\rm tr}} = \bE [\yv^{\rm tr}\yv^{{\rm tr}\, \herm} ] = \Xm^{{\rm tr}\, \herm} \Sigmam^h \Xm^{\rm tr} + \mathbf{I}$ is the covariance of the training vector. Note that selecting a set of $\beta_{\rm fb}$ vectors that satisfy \eqref{eq:psi_vecs} and which contain a subset of $\min(\beta_{\rm tr},\beta_{\rm fb})$ linearly independent elements is always possible because $\Sigmam_{\yv^{\rm tr}}$ is of rank $\beta_{\rm tr}$.  
	The BS computes the minimum MSE (MMSE) estimate of the full-dimensional channel given the feedback as 
	\begin{equation}
		\wh{\hv} = \bE\left[\hv | \yv^{\rm af} \right] = \Sigmam^h \Xm^{\rm tr} \Psim \Sigmam_{\yv^{\rm af}}^{-1} \yv^{\rm af}, 
	\end{equation}
where $\Sigmam_{\yv^{\rm af}} = \Psim^\herm \Xm^{{\rm tr}\herm} \Sigmam^h \Xm^{\rm tr} \Psim +\Psim^\herm \Psim + \mathbf{I}  $. Note that unlike the rate-distortion quantizer and the ECSQ, with AF we do \textit{not} assume channel covariance knowledge at the UE. The CSIT estimation error with AF can be computed as
	\begin{equation}
		\begin{aligned}
			D &= \bE [ \Vert\hv - \widehat{\hv} \Vert^2 ]\\
			& = {\rm Tr}\left( \Sigmam^h -  \Sigmam^h \Xm^{\rm tr} \Psim \Sigmam_{\yv^{\rm af}}^{-1} \Psim^\herm \Xm^{{\rm tr}\, \herm} \Sigmam^h\right)\\
		\end{aligned}
	\end{equation}
	where ${\rm Tr}(\cdot)$ denotes the trace. The following theorem shows the scaling law of this error for large SNR.
	\begin{theorem}\label{thm:af_distortion_modified}
		The analog feedback strategy achieves a CSIT estimation error of $\Theta (\snrdl^{-\alpha_{\rm af}})$ with probability one over the realizations of the training matrix $\Xm^{\rm tr}$, where
		\begin{equation}\label{eq:af_decay_exponent}
			\alpha_{\rm af} = \mathbf{1}_{ [r,\infty) }\left(\min(\beta_{\rm tr},\beta_{\rm fb}) \right),
		\end{equation}
		is the quality scaling exponent.
	\end{theorem}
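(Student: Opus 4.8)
The plan is to recast the error $D$ as a single trace that isolates the dependence on $\snrdl$, separate the two regimes of $(\beta_{\rm tr},\beta_{\rm fb})$ by a rank count, and then handle the favorable regime through a uniform conditioning argument on the effective observation operator. First I would diagonalize $\Sigmam^h=\Um\Lambdam\Um^\herm$ with $\Um^\herm\Um=\mathbf{I}_r$ and $\Lambdam=\diag(\lambda_1,\dots,\lambda_r)$, $\lambda_1\ge\dots\ge\lambda_r>0$, whiten $\hv=\Um\Lambdam^{1/2}\gv$ with $\gv\sim\cg(\mathbf{0},\mathbf{I}_r)$, and set $\Bm=\Xm^{\rm tr}\Psim$, $\Rm_n=\Psim^\herm\Psim+\mathbf{I}$. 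Then \eqref{eq:af_bs_signal} reads $\yv^{\rm af}=\gv^\herm(\Lambdam^{1/2}\Um^\herm\Bm)+\zv^{\rm af}$ with $\zv^{\rm af}\sim\cg(\mathbf{0},\Rm_n)$; since $\gv\mapsto\hv$ is injective, the Bayesian estimator of $\hv$ is the image of the one for $\gv$, and the standard Gaussian MMSE formula yields $D=\trace\!\big((\Lambdam^{-1}+\Km)^{-1}\big)$ with $\Km=\Um^\herm\Bm\,\Rm_n^{-1}\Bm^\herm\Um$. Substituting $\Xm^{\rm tr}=\sqrt{\snrdl/M}\,\Gm$, where the nonzero entries of $\Gm$ are i.i.d.\ $\cg(0,1)$ and independent of $\snrdl$, and using $\Psim\Rm_n^{-1}\Psim^\herm=\mathbf{I}-(\mathbf{I}+\Psim\Psim^\herm)^{-1}\preceq\mathbf{I}$, I obtain $0\preceq\Km\preceq\tfrac{\snrdl}{M}\Um^\herm\Gm\Gm^\herm\Um\preceq\tfrac{\snrdl}{M}\|\Gm\|_{\rm op}^2\mathbf{I}_r$.

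From $\Km\succeq0$ this already gives the universal upper bound $D\le\trace(\Sigmam^h)=\Theta(1)$, and from $\lambda_{\max}(\Km)\le\tfrac{\snrdl}{M}\|\Gm\|_{\rm op}^2$ the universal lower bound $D\ge r\big(\lambda_r^{-1}+\tfrac{\snrdl}{M}\|\Gm\|_{\rm op}^2\big)^{-1}=\Omega(\snrdl^{-1})$; both hold for every $\Gm$ of finite spectral norm, hence almost surely. Next, $\rank(\Km)\le\min(r,\beta_{\rm tr},\beta_{\rm fb})$ by the dimensions of $\Um^\herm$, $\Xm^{\rm tr}$, $\Psim$ alone, so if $\min(\beta_{\rm tr},\beta_{\rm fb})<r$ then $\Km$ has a unit null vector $\vv$, whence $\vv^\herm(\Lambdam^{-1}+\Km)\vv\le\lambda_r^{-1}$ and $D\ge\lambda_{\max}\!\big((\Lambdam^{-1}+\Km)^{-1}\big)\ge\lambda_r$; with the upper bound this gives $D=\Theta(1)=\Theta(\snrdl^{-\alpha_{\rm af}})$ for $\alpha_{\rm af}=\mathbf{1}_{[r,\infty)}(\min(\beta_{\rm tr},\beta_{\rm fb}))=0$, and this part needs no probabilistic argument.

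For $\min(\beta_{\rm tr},\beta_{\rm fb})\ge r$ the lower bound $\Omega(\snrdl^{-1})$ is already available, so it remains only to show $D=O(\snrdl^{-1})$, for which it suffices that $\lambda_{\min}(\Km)\ge c\,\snrdl$ for all large $\snrdl$, almost surely (then $(\Lambdam^{-1}+\Km)^{-1}\preceq(c\,\snrdl)^{-1}\mathbf{I}_r$). I would write $\Km=\tfrac{\snrdl}{M}\Fm\Nm\Fm^\herm$ with $\Fm=\Um^\herm\Gm\in\bC^{r\times\beta_{\rm tr}}$ ($\snrdl$-independent) and $\Nm=\Nm(\snrdl)=\Psim(\Psim^\herm\Psim+\mathbf{I})^{-1}\Psim^\herm$, PSD of rank $\min(\beta_{\rm tr},\beta_{\rm fb})\ge r$ with nonzero eigenvalues in $(0,1)$, and combine: (i) since the pilots are spatially isotropic and $\beta_{\rm tr}\ge r$ (the pilot subcarriers carrying the full support of $\Sigmam^h$), $\Fm$ has full row rank $r$ almost surely; (ii) the power constraint \eqref{eq:psi_vecs} reads $\psiv_i^\herm\big(\tfrac{\snrdl}{M}\Gm^\herm\Sigmam^h\Gm+\mathbf{I}\big)\psiv_i=M\snrul$, and if the columns of $\Psim$ are taken along a fixed, well-conditioned family of $\min(\beta_{\rm tr},\beta_{\rm fb})$ linearly independent directions and only rescaled to meet this constraint — a legitimate choice that uses no covariance knowledge for the directions — then each column scaling tends to a finite positive limit as $\snrdl\to\infty$, so $\Nm(\snrdl)\to\Nm_\infty$, a fixed PSD matrix of rank $\min(\beta_{\rm tr},\beta_{\rm fb})\ge r$ whose kernel is a fixed subspace of dimension $\beta_{\rm tr}-\min(\beta_{\rm tr},\beta_{\rm fb})\le\beta_{\rm tr}-r$. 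As the random $r$-dimensional row space of $\Fm$ is almost surely in general position with respect to that fixed kernel, $\Fm\Nm_\infty\Fm^\herm\succ0$ a.s., and by continuity $\liminf_{\snrdl\to\infty}\lambda_{\min}\!\big(\Fm\Nm(\snrdl)\Fm^\herm\big)=\lambda_{\min}(\Fm\Nm_\infty\Fm^\herm)>0$, so $\lambda_{\min}(\Km)\ge c\,\snrdl$ and $D=\Theta(\snrdl^{-1})=\Theta(\snrdl^{-\alpha_{\rm af}})$ with $\alpha_{\rm af}=1$. Together the two cases give exactly \eqref{eq:af_decay_exponent}.

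The only delicate step is (ii): one must prevent the power-constrained spreading matrix from degenerating as $\snrdl\to\infty$ — either becoming ill-conditioned on its column space, or having its column space drift toward the orthogonal complement of the row space of $\Fm$ — since either would make $\Fm\Nm(\snrdl)\Fm^\herm$ singular in the limit. Building $\Psim$ from fixed directions rescaled by $M\snrul/\big(\qv_j^\herm(\tfrac{\snrdl}{M}\Gm^\herm\Sigmam^h\Gm+\mathbf{I})\qv_j\big)$ resolves this, because these scalings have finite positive limits (here one uses that a fixed direction $\qv_j$ almost surely has nonzero Rayleigh quotient against $\Gm^\herm\Sigmam^h\Gm$, which is where the isotropy of the pilots and the non-degeneracy of the training enter); the residual genericity claims then reduce to the elementary fact that a Gaussian-random $r$-dimensional subspace almost surely meets a fixed subspace of complementary-or-smaller dimension only at the origin. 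Everything else is linear algebra together with the scaling $\Xm^{\rm tr}=\sqrt{\snrdl/M}\,\Gm$.
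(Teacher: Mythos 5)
Your proposal is correct and follows essentially the same route as the paper: the same Gaussian MMSE closed form (your $\trace\big((\Lambdam^{-1}+\Km)^{-1}\big)$ is the paper's $\trace\big(\Lambdam_h(\mathbf{I}+\Gm)^{-1}\big)$ in disguise), the same factorization $\Xm^{\rm tr}=\sqrt{\snrdl}\,\Xm_0^{\rm tr}$, the same rank dichotomy of $\min(\beta_{\rm tr},\beta_{\rm fb})$ versus $r$, and the same handling of the power constraint by writing $\psiv_i=\sqrt{a_i}\,\phiv_i$ with $a_i$ tending to a finite positive limit (the paper's $\Sm=\Theta(1)$ step). Your explicit limit $\Nm(\snrdl)\to\Nm_\infty$ and the genericity of the row space of $\Um^\herm\Gm$ against the fixed kernel of $\Phim^\herm$ is merely a slightly more careful rendering of the paper's probability-one full-rank claim for its matrix $\Gm$, so no substantive difference remains.
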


	\begin{proof}
		See Appendix \ref{app:af_thm_proof_modified}.
	\end{proof}
	The resulting system DoF with AF is given as
	\begin{equation}\label{eq:DoF_af}
		\text{DoF}_{\rm af}  = 1 + \mathbf{1}_{[r,\infty)}\left(\min(\beta_{\rm tr},\beta_{\rm fb})\right)(K-1).
	\end{equation}

	The quality scaling exponent of AF is illustrated as a heat map in Fig. \ref{fig:quality_scaling_exponent} (right), where a comparison between the right and left figures shows that the exponent achieved by AF is the same as that of the rate-distortion quantizer for all training and feedback dimensions $(\beta_{\rm tr},\beta_{\rm fb})$ belonging to regions $\Rc_1$ and $\Rc_3$. In region $\Rc_2$, rate-distortion feedback achieves an exponent of $\beta_{\rm fb}/r>0$, whereas AF has exponent zero, and is therefore strictly sub-optimal. 
	
	\section{Numerical Results}\label{sec:numerical_results}
	We consider a ULA with $M=32$ antennas at the BS communicating with $K=6$ UEs over a total of $N=24$ OFDM subcarriers. The channel coherence time is assumed to be $5$ ms, corresponding to $T=5\times 14 = 70$ OFDM symbols in LTE \cite{sesia2011lte}.\footnote{The coherence time can vary to emulate fast-varying (smaller $T$) and slow-varying (large $T$) channels.}  We consider $N_p = 4$ pilot subcarriers, uniformly placed one per $6$ subcarriers. Different training dimensions are considered by varying the number of pilots sent per pilot subcarrier, i.e. by changing the variable $T_p$. With these parameters, the isotropic pilot vectors are generated according to \eqref{eq:isotropic_pilots}. We produce UE channels according to a multipath model \cite{sayeed2003virtual,bajwa2008compressed}, as 
		\begin{equation}
			h_{m,n} = \sum_{\ell=1}^L c_{\ell} e^{j\pi m \frac{2 d}{\lambda}\sin \theta_{\ell} }	e^{-j2\pi n \Delta f  \tau_{\ell}},
		\end{equation}
		where $L$ denotes the number of paths, $\theta_{\ell}\in [-\pi/2,\pi/2]$ is the angle-of-arrival (AoA) of the $\ell$-th signal path, $\tau_{\ell} \in [0,\tau_{\max}]$ is the path delay where $\tau_{\max}$ is the
		maximum delay spread of the channel which is bounded by the length of the OFDM cyclic prefix, $c_\ell\in \bC$ is the complex gain of path $\ell$ and $\Delta f$ is the subcarrier spacing (assumed uniform), $\lambda$ is the wavelength corresponding to the central carrier frequency, and $d$ is the uniform spacing between array elements, taken to be $d=\lambda/2$ for simplicity. The path AoAs and delays are chosen uniformly at random and gains are generated as standard Gaussian random variables. The resulting channel covariance is normalized such that ${\rm Tr}(\Sigmam_h) = MN$. The number of paths $L$ is the same for all UEs, but the AoAs, delays and gains are generated independently across UEs. Because the AoAs and delays are randomly generated, the channel covariance rank is given, with probability one, by $r=\min(L,MN )$.

	We first study the CSIT estimation MSE performance as a function of SNR for all the feedback strategies. We generate 10 realizations of training matrices and $K$ random covariances as above, and for each realization, we generate 100 random instances of each UE's channel. Here the covariance rank is set to $r=30$. The \textit{average} MSE is computed as
	\[\text{MSE}_{\text{avg}} = \frac{1}{K}\sum_k E[\Vert \hv_k- \widehat{\hv}_k  \Vert^2],\]
	where the mean $E[\Vert \hv_k -\widehat{\hv}_k  \Vert^2]$ is empirically calculated from the random realizations of training matrices, covariances and channels. We plot the the average MSE against DL SNR for two points in the $\beta_{\rm tr}-\beta_{\rm fb}$ plane, namely $(\beta_{\rm tr},\beta_{\rm fb}) = (40,40)$ and $(40,10)$. From \eqref{eq:decay_exponent_ub} and \eqref{eq:af_decay_exponent} we expect that all feedback methods achieve a quality scaling exponent of $\alpha = 1$ since $\beta_{\rm tr} = \beta_{\rm fb}> r$. In the second case however, we expect that the rate-distortion and ECSQ achieve a quality scaling exponent of $\alpha = \beta_{\rm fb}/r = 1/3$, while AF achieves $\alpha = 0$ because $\beta_{\rm tr}>r>\beta_{\rm fb}$. These are confirmed by the average MSE vs SNR curves of Fig. \ref{fig:mse_vs_snr}, where the slope of the curves in high SNR is equivalent to the quality scaling exponent. In the first case, AF achieves the optimal quality scaling exponent because both training and feedback dimensions exceed the covariance rank. In the second case, the rate-distortion and ECSQ feedbacks achieve a non-zero exponent, unlike AF which has a constant error ($\alpha=0$) even for large SNR. Note also that ECSQ yields an error close to the optimal, with a performance gap between the two that widens when the feedback dimension is lower. This is expected, since when the feedback rate is higher, the gap between the error achieved by scalar quantization and the optimal error is lower.

	

	In the second experiment, we compare the performance of different feedback strategies in terms of achievable Downlink sum-rate in a multi-user system. Here we consider ZF precoding in DL, where the transmit data vector over subcarrier $n$ is given by $\xv^{\rm d}[n] = \sum_{k=1}^K \sqrt{P_k} s_k [n] \vv_k [n]$, where $P_k = \snrdl /K$ is the (uniform) transmission power per UE, $s_k [n]\in \bC$ is the data symbol intended for UE $k$ such that $\bE[|s_k [n]|^2]\le 1$, and $\{\vv_k [n]\}_{k=1}^K$ are the precoding vectors, given by the column-normalized pseudo-inverse of \scalebox{0.9}{$\widetilde{\Hm}[n]=[\widetilde{\hv}_1[n],\ldots,\widetilde{\hv}_K[n]]^\herm$}. The choice of ZF precoding (rather than rate-splitting) is for the sake of simplicity and the fact that it is by far the most practical scheme used in 
	real systems. Defining the variables $g_{k,k'}[n] = \sqrt{P_{k'}} (\widetilde{\hv}_{k}[n]^\herm \vv_{k'}[n])$, we can write the achievable ergodic rate\footnote{This rate can be achieved assuming perfect knowledge of the coefficients $\{g_{k,k'}[n] \}$ at UE $k$, and is a harmless assumption for our purposes.} as \cite{caire2018ergodic}
	\begin{equation}\label{eq:ach_rate}
		R_{\rm ub} [k,n] =\bE \left[\log \left(1+ \frac{|g_{k,k}[n]|^2}{N_0 + \sum_{k'\neq k} |g_{k,k'}[n]|^2}\right) \right].
	\end{equation}
	\begin{figure}[t]
	\centerline{\includegraphics[scale=0.75]{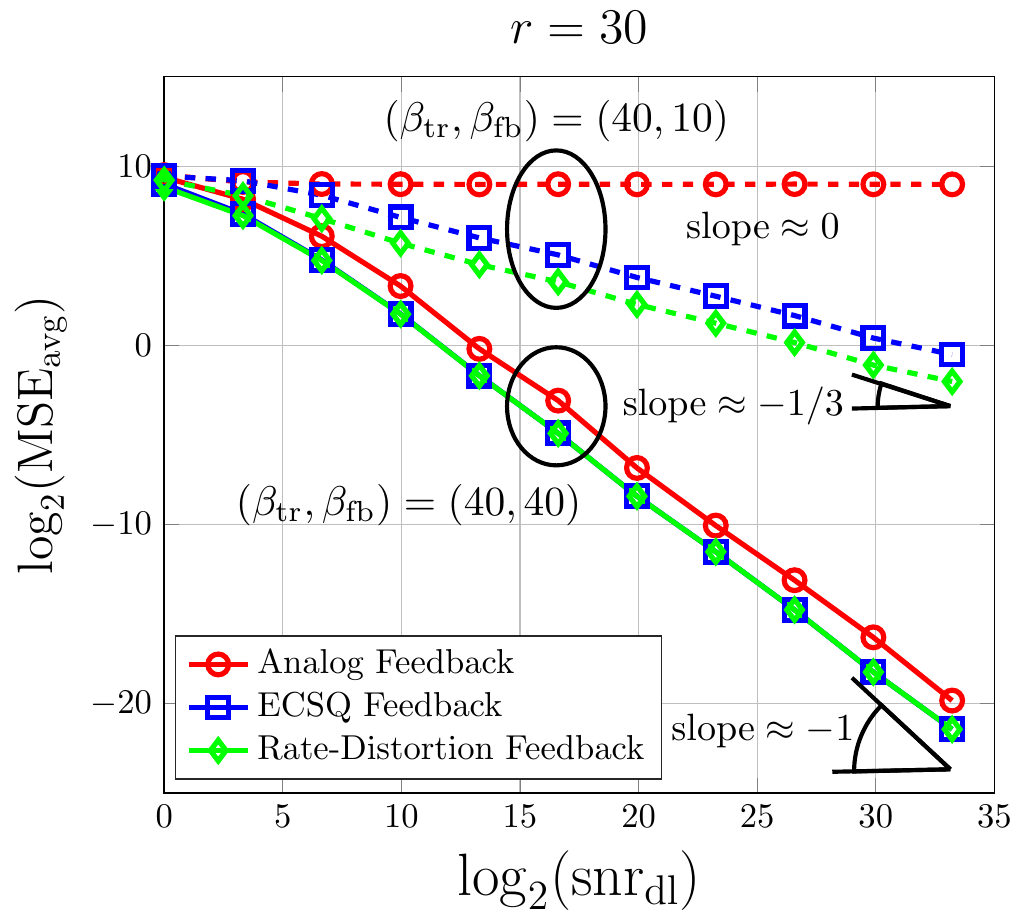}
	}
	\caption{CSIT estimation MSE vs Downlink SNR for the three feedback strategies, where $M=32$, $N=24$ and $K=6$.}
	\label{fig:mse_vs_snr}
\end{figure}
The average sum-rate is computed by averaging \eqref{eq:ach_rate} over all subcarriers, distinguishing between pilot and data subcarriers, and summing the result over all UEs. Fig. \ref{fig:rate_vs_beta} illustrates two sets of curves, comparing the sum-rate vs training dimension for the three feedback strategies. In the first set, we have assumed a high DL SNR value of $50$ dBs and we have set $\zeta = \beta_{\rm fb}/\beta_{\rm tr}=1$, so that at each point of the associated curves, the number of training and feedback dimensions are equal. This corresponds to a line trajectory in the $\beta_{\rm tr}-\beta_{\rm fb}$ plane. From Fig. \ref{fig:quality_scaling_exponent} we expect that, moving along the line $\beta_{\rm fb} = \beta_{\rm tr}$, AF achieves the same quality scaling exponent as the optimal rate-distortion feedback, which implies that in high SNR the two feedback strategies should have close rate performance. This is confirmed by the first set of curves in Fig. \ref{fig:rate_vs_beta} where we see that AF achieves a sum-rate that is very close to that of the rate-distortion feedback.
	
	The second set of curves corresponds to a moderate DL SNR of $20$ dBs, and $\zeta = 1/4$, which means that for each point of the curves the feedback dimension is taken to be $1/4$ the training dimension (rounded up when $\beta_{\rm tr}/4$ is not an integer). In this case, we have a noticeable gap between the sum-rate with AF and that of the optimal because the SNR is set to a moderate value and more importantly, because of the fact that in this case for training dimension values of $\beta_{\rm tr} \in(30,120)$, we have $\beta_{\rm fb} = \beta_{\rm tr}/4 \in (7,30)$, which means that for these points we have $(\beta_{\rm tr},\beta_{\rm fb}) \in \Rc_2$ (see Fig. \ref{fig:quality_scaling_exponent}). In the region $\Rc_2$, AF is \textit{strictly} suboptimal, in the sense that while the optimal quality scaling exponent is $\alpha_{\rm rd} = \beta_{\rm fb}/r \in (1/4,1)$, for AF it is $\alpha_{\rm af} = 0$. 
	\begin{figure}[t]
		\centerline{\includegraphics[scale=0.75]{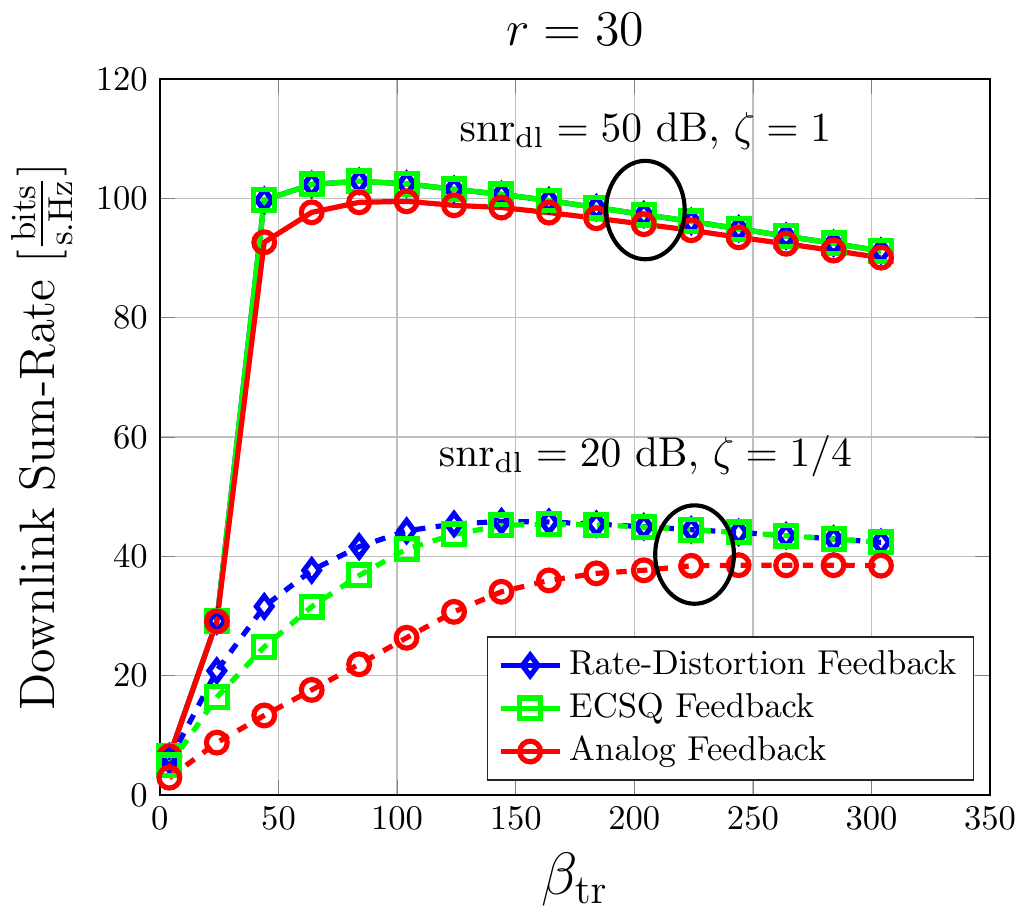}
		}
		\caption{Achievable sum-rate vs training dimension for the three feedback strategies, where $M=32$, $N=24$ and $K=6$.}
		\label{fig:rate_vs_beta}
	\end{figure}
	
	Note also the gap between the sum-rate with ECSQ and the optimal in the two sets of curves in Fig. \ref{fig:rate_vs_beta}. In the first set, because we have a large SNR and a large feedback rate ($\zeta=1$ in comparison to $\zeta=1/4$), the CSIT error achieved by ECSQ for the same pair of training and feedback dimensions is very close to the optimal. However, in the second case we have a relative shortage of feedback bits, so that the 1.5 bit per quantized coefficient overhead of ECSQ corresponds to a noticeable CSIT error and we see a larger gap between the sum-rates. Nevertheless, ECSQ is still close to the optimal and performs better than AF, which comes at the price of channel covariance knowledge at the UE side. Therefore there exists a decision point where we can choose the ECSQ feedback in case the channel covariance is available at the UE and the AF in case it is not.

	\appendix
	\subsection{Proof of Lemma \ref{lem:rate_distortion}}\label{app:RD_lemma_proof}
	We start by stating a few standard results regarding the (remote) rate-distortion function. It is well-known that the rate-distortion function of an i.i.d source represented by the random variable $\uv$ with distribution $p_{\uv}$ can be computed as (see \cite{thomas2006elements} Theorem 10.2.1)
	\begin{equation}\label{eq:rate_distortion}
		R_{\uv} (D) = \min_{p_{\widebar{\uv}|\uv}: d(\widebar{\uv},\uv)\le D} I(\widebar{\uv};\uv),
	\end{equation}
	where $\widebar{\uv}$ is the quantization of $\uv$, $I(\widebar{\uv};\uv)$ is the mutual information between $\uv$ and $\widebar{\uv}$, $d(\widebar{\uv},\uv)$ is the distortion between $\uv$ and $\widebar{\uv}$ (see \eqref{eq:err_def}) and the minimum is taken over all conditional distributions for which the joint distribution $p_{\uv,\widebar{\uv}} $ satisfies the distortion constraint. It is also known that the remote rate distortion function of a source represented by the random variable $\hv$, and encoded given its observations denoted by the random variable $\yv^{\rm tr}$ is given by (see \cite{berger1971rate})
	\begin{equation}\label{eq:remote_rate_distortion}
		R_{\hv}^r (D) = \min_{p_{\widebar{\hb}|\yb^{\rm tr}} : d(\widebar{\hv},\hv)\le D}~I(\widebar{\hv};\yv^{\rm tr})
	\end{equation}
	where $\widebar{\hv}$ is the quantization of $\hv$, $I(\widebar{\hv};\yv^{\rm tr})$ is the mutual information between $\widebar{\hv}$ and $\yv^{\rm tr}$, and the minimum is taken over all conditional distributions $p_{\widebar{\hv}|\yv^{\rm tr}}$ for which the joint distribution $p_{\hv,\widebar{\hv}}= p_{\hv} p_{\yv^{\rm tr}|\hv}p_{\widebar{\hv}|\yv^{\rm tr}} $ satisfies the distortion constraint. Note that since all sources are i.i.d, we have removed realization index superscripts from the variables (hence $\hv$ instead of $\hv^{(i)}$). From the premise of the lemma, $\uv$ is the MMSE estimate of the channel given the training measurements, i.e. $\uv = \bE [\hv|\yv^{\rm tr}]$. Using the same technique employed to prove inequality (15) of \cite{eswaran2019remote} (see Appendix A in \cite{eswaran2019remote}), we can verify that the remote rate-distortion function of $\hv$ is related to the rate distortion function of $\uv$ by
	\begin{equation}\label{eq:relation_1}
		R_{\hv}^r (D)  = R_{\uv}(D-D_{\rm mmse}),
	\end{equation}
	for $D\ge D_{\rm mmse}$, where $D_{\rm mmse} = \bE \left[\Vert \hv - \uv\Vert^2 \right]$ is the MMSE of estimating the channel at the UE. 
	
	On the other hand, the rate-distortion function of a correlated vector Gaussian source is given by reverse water-filling over its covariance eigenvalues \cite{cover2006elements}. If we denote the eigenvalues of $\Sigmam^u$ by $\{ \lambda_{\ell}^u \}_{\ell=1}^{MN}$, then we have $R_{\uv} (D) = \sum_{\ell=1}^{MN}\left[ \log \frac{\lambda_{\ell}^u}{\gamma} \right]_+,$ where $\gamma $ is chosen such that $\sum_{\ell=1}^{MN}\min \{ \gamma, \lambda_{\ell}^u \}=D$. Plugging this in \eqref{eq:relation_1} we get
	\begin{equation}
		R_{\hv}^r (D) = \sum_{\ell=1}^{MN}\left[ \log \frac{\lambda_{ \ell}^u}{\gamma} \right]_+,
	\end{equation}
	where $\gamma$ is chosen such that $\sum_{\ell=1}^{MN}\min \{ \gamma, \lambda_{u, \ell} \} = D-D_{\rm mmse}.$ The proof is complete. \hfill $\blacksquare$

	\subsection{Proof of Theorem \ref{thm:rate_asymp}}\label{app:rate_asymp_thm_proof}
	We divide the proof to two parts. In the first part, we show that if $\beta_{\rm tr} < r$, then the achievable error behaves as $\Theta (1)$ for all realizations of $\Xm^{\rm tr}$. In the second part we show that if $\beta_{\rm tr} \ge r$, then an error decaying as $\Theta (\snrdl^{-\min(\beta_{\rm fb}/r,1)})$ is achievable with probability one over the realizations of $\Xm^{\rm tr}$.\\
	
	\noindent\textbf{Part I.} To prove part I, we first bound the minimum mean squared error (MMSE) of estimating the channel given the training measurements at the UE, namely the variable $D_{\rm mmse}$. From $\uv = \bE [\hv | \yv^{\rm tr}]$ we have
	\begin{equation}\label{eq:Dmmse}
		D_{\rm mmse} = \trace \left(\bE [\hv\hv^\herm ] - \bE[\hv \yv^{\rm tr} ]\bE[\yv^{{\rm tr}\, \herm} \yv^{\rm tr} ]^{-1} \bE[\hv \yv^{\rm tr} ]^\herm  \right),
	\end{equation}
	where $\bE [\hv\hv^\herm ] = \Sigmam^h$, $\bE[\hv \yv^{\rm tr} ] =\Sigmam^h \Xm^{\rm tr} $, and $\bE[\yv^{{\rm tr}\, \herm} \yv^{\rm tr} ] = \Xm^{{\rm tr}\, \herm }\Sigmam^h \Xm^{\rm tr} + \mathbf{I} $. The eigendecomposition of $\Sigmam^h$ can be written as $\Sigmam^h = \Um_h \Lambdam_h \Um_h^\herm$, where $\Um_h\in \bC^{MN\times r} $ is a tall unitary matrix and $\Lambdam_h = {\rm diag}(\lambdav) \in \bR^{r\times r}$ is a diagonal matrix of positive eigenvalues represented by the vector $\lambdav=[\lambda_1,\ldots,\lambda_r]^\transp$. Using this decomposition and applying the Sherman-Morrison-Woodbury matrix identity to $\left( \Xm^{{\rm tr}\, \herm }\Sigmam^h \Xm^{\rm tr} + \mathbf{I} \right)^{-1}$, we have
	\begin{equation}
		\begin{aligned}
			\bE[\hv \yv^{\rm tr} ]&\bE[\yv^{{\rm tr}\, \herm} \yv^{\rm tr} ]^{-1} \bE[\hv \yv^{\rm tr} ]^\herm=\Um_h \Lambdam_h^{1/2}\Gm \Lambdam_h^{1/2}\Um_h^\herm \\&- \Um_h\Lambdam_h^{1/2}\Gm \left( \mathbf{I} + \Gm \right)^{-1}\Gm\Lambdam_h^{1/2}\Um_h^\herm,
		\end{aligned}
	\end{equation}
	where we have defined 
	\begin{equation}\label{eq:G_def_0}
		\begin{aligned}
			\Gm = \Lambdam_h^{1/2}  \Um_h^\herm \Xm^{\rm tr} \Xm^{{\rm tr}\, \herm }\Um_h  \Lambdam_h^{1/2}. 
		\end{aligned}
	\end{equation}
	Plugging this into \eqref{eq:Dmmse} we have
	\begin{equation}\label{eq:Dmmse_2}
		D_{\rm mmse} = \trace \left( \Lambdam_h \left(\mathbf{I} -  \Gm + \Gm ( \mathbf{I}+\Gm)^{-1} \Gm \right)\right). 
	\end{equation}
	Using a simple trace inequality, one can show that 
	\begin{equation}\label{eq:mmse_bound}
		\lambda_{\rm min} \, g(\snrdl) \le D_{\rm mmse}\le \lambda_{\rm max} \, g(\snrdl),
	\end{equation}
	where $\lambda_{\rm min}$ and $\lambda_{\rm max}$ are minimum and maximum channel covariance eigenvalues, respectively, and we have defined  $g(\snrdl) = \trace \left( \mathbf{I} -  \Gm + \Gm ( \mathbf{I}+\Gm)^{-1} \Gm \right) $, where we have made the dependency of $g(\cdot)$ on $\snrdl$ explicit. We now demonstrate this dependency.
	For a given realization of the training matrix $\Xm^{\rm tr}$, denote the eigenvalues of $\Gm$ by $\mu_{i},\, i=1,\ldots, r$. We can write
	\begin{equation}\label{eq:tr_expansion_0}
		\begin{aligned}
			g(\snrdl) &= r - \sum_i \mu_i +\sum_i \frac{\mu_i^2 }{\mu_i + 1} \\ 
			& =  r - \sum_{i=1}^r \frac{\mu_i }{\mu_i + 1}
		\end{aligned}
	\end{equation} 
	 Also note that we can represent the training matrix as $\Xm^{\rm tr} = \sqrt{\snrdl}\Xm_0^{\rm tr}$, where $\Xm_0^{\rm tr}$ is randomly generated and independent from $\snrdl$. From this and the definition \eqref{eq:G_def_0}, it follows that  $\mu_i = \Theta (\snrdl)$ for all $\mu_i \neq 0$. Using this and \eqref{eq:tr_expansion_0}, we deduce that if $\Gm$ is full-rank ($\mu_i\neq 0$ for all $i$) then $g(\snrdl) = \Theta (\snrdl^{-1})$ and using \eqref{eq:mmse_bound}  we have $ D_{\rm mmse} = \Theta (\snrdl^{-1})$. Conversely, if $\Gm$ has at least one zero eigenvalue ($ \mu_i = 0$ for some $i$) then from \eqref{eq:tr_expansion_0} we have $ g(\snrdl) >1$ and from \eqref{eq:tr_bounds} we have $ D_{\rm mmse} = \Theta(1)$.
	
	Now, the rank of $\Gm$ depends on the specific realization of $\Xm^{\rm tr}$. When $\beta_{\rm tr}\ge r$ and $\Xm^{\rm tr}$ consists of Gaussian isotropic pilot vectors, $\Gm$ is full-rank with probability one because of the following. The product $\Um_h^\herm \Xm^{{\rm tr}}$ consists of $\beta_{\rm tr} $ independent Gaussian columns, each of dimension $r$. The event that these vectors span a space of dimension less than $r$ has probability zero. Therefore, $\Um_h^\herm \Xm^{\rm tr}  \Xm^{{\rm tr}\, \herm }\Um_h$ has rank $r$ with probability one, and since $\Lambdam_h^{1/2}$ has positive diagonal elements, by definition \eqref{eq:G_def_0} $\Gm$ also has rank $r$ with probability one and $\mu_i \neq 0$ for all $i$. Conversely, if $\beta_{\rm tr}< r$, $\Gm$ has rank at most equal to $\beta_{\rm tr}$ for any realization of the training matrix, leading to $\mu_i= 0$ for some $i$. This results in $D_{\rm mmse}=\Theta (1)$. In short, we have proved
	\begin{equation}
		D_{\rm mmse} = \begin{cases}
			\Theta (\snrdl^{-1}),  &\beta_{\rm tr} \ge r,\\
			\Theta (1),&  \beta_{\rm tr} < r.
		\end{cases}
	\end{equation}
	In addition, Lemma \ref{lem:rate_distortion} states that only errors $D\ge D_{\rm mmse}$ are achievable. It follows that, if $\beta_{\rm tr}< r$, then the minimum achievable error in estimating the CSIT behaves as $\Theta (1)$. \\

	\noindent\textbf{Part II.} To prove the second part, first note that if $\beta_{\rm tr} \ge r$, then the covariance of the MMSE channel estimate $\uv$ at the UE, given as
	\begin{equation}\label{eq:Sigmau}
	\Sigmam^u = \Sigmam^h \Xm^{\rm tr}\left(  \Xm^{{\rm tr}\, \herm} \Sigmam^h \Xm^{{\rm tr}} + \mathbf{I} \right)^{-1}\Xm^{{\rm tr}\, \herm} \Sigmam^h
	\end{equation}
	 has rank $r$ with probability one over the realizations of $\Xm^{\rm tr}$. Without loss of generality assume the eigenvalues of $\Sigmam^u$ to be ordered as $\lambda_{1}^u \ge \ldots \ge \lambda_{r}^u > 0  $. Next, consider the remote rate-distortion function in Lemma \ref{lem:rate_distortion}, given as
	\begin{equation}
		R_{\hv}^r (D) = \sum_{\ell=1}^{MN}\left[ \log \frac{\lambda_{\ell}^{u}}{\gamma} \right]_+,~\text{for}~ D\ge D_{\rm mmse}
	\end{equation}
	where $\gamma$ is chosen such that $\sum_{\ell=1}^{MN}\min \{ \gamma, \lambda_{\ell}^{u} \} = D-D_{\rm mmse}.$ Consider an interval of error values $D$ for which $D - D_{\rm mmse}<\varepsilon$ for some $\varepsilon$. If $\varepsilon$ is sufficiently small, then $\gamma = (D-D_{\rm mmse})/r$ and the remote rate-distortion function is given by 
	\begin{equation}\label{eq:R_formula}
		 R_{\hv}^r (D) =f(r) - r \log (D-D_{\rm mmse}),
	\end{equation}
	where $f(r)= \sum_{\ell = 1}^{r} \log \lambda_{ \ell}^u  + r\log r$ is a value independent of $D$. Therefore, we can write the remote distortion-rate function as 
	\begin{equation}\label{eq:dist_rate_func}
		D_{\hv}^r (R) = 2^{\frac{f(r)-R}{r}} + D_{\rm mmse}.
	\end{equation}
	for all $R>R_{\varepsilon}$, for a sufficiently large $R_{\varepsilon}>0$. Now let $R=\beta_{\rm fb} C^{\rm ul}$. Replacing the MIMO-MAC capacity formula $C^{\rm ul} = \log (1+ M\kappa \snrdl)$, we notice that there exists some $\snrdl^\varepsilon$ such that $\beta_{\rm fb} \log (1+M\kappa \snrdl)>R_{\varepsilon}$ for all $\snrdl>\snrdl^{\varepsilon}$. For these $\snrdl$ values we have we have
	\begin{equation}\label{eq:my_eqqq}
		\begin{aligned}
		\log ( D_{\hv}^r (\beta_{\rm fb} C^{\rm ul})- D_{\rm mmse}) &= \log r + \sum_{\ell=1}^r \log \lambda_{ \ell}^u /r\\
		&- \frac{\beta_{\rm fb}}{r}\log (1+ M\kappa \snrdl).
	\end{aligned}
	\end{equation}
	From \eqref{eq:Sigmau} one can show that the non-zero eigenvalues of $\Sigmam^u$ scale as $\Theta (1)$ for large $\snrdl$, i.e. $\lambda_{\ell}^u = \Theta(1),\, \ell=1,\ldots,r$. Therefore, the right-hand-side of \eqref{eq:my_eqqq} behaves as $\Theta(\log (\snrdl^{-\beta_{\rm fb}/r}))$ in $\snrdl$. It follows that 
	\begin{equation}\label{eq:my_eq_2}
		\begin{aligned}
			 D_{\hv}^r (\beta_{\rm fb} C^{\rm ul}) &= D_{\rm mmse} +\Theta(\snrdl^{-\beta_{\rm fb}/r})\\
			 & =\Theta(\snrdl^{-1})+\Theta(\snrdl^{-\beta_{\rm fb}/r})\\
			 & =\Theta(\snrdl^{-\min(\beta_{\rm fb}/r,1)}) 
		\end{aligned}
	\end{equation}
	Finally, from the source-channel separation with distortion theorem, we can achieve a CSIT estimation error of $D > D_{\hv}^r ( \beta_{\rm fb} C^{\rm ul} )$ if and only if we use the UL channel over $\beta_{\rm fb}$ feedback dimensions (see Section \ref{sec:RD_LB}), which combined with \eqref{eq:my_eq_2} shows that when $\beta_{\rm tr}\ge r$, we can achieve an error decay of $\Theta(\snrdl^{-\min(\beta_{\rm fb}/r,1)}) $ with probability one over the realizations of $\Xm^{\rm tr}$ with a feedback dimension of $\beta_{\rm fb}$.
	This completes the proof.\hfill $\blacksquare$
		\subsection{Proof of Theorem \ref{thm:af_distortion_modified}}\label{app:af_thm_proof_modified}
		The distortion of MMSE estimation of the channel given the feedback in \eqref{eq:af_bs_signal} is given by 
		\begin{equation}\label{eq:af_distortion_modified}
			d(\hv,\widehat{\hv}) = \trace \left ( \bE[\hv \hv^\herm] - \bE[\hv \yv^{{\rm af}}] \bE[\yv^{\rm af\, \herm} \yv^{{\rm af}}]^{-1} \bE[\hv \yv^{{\rm af}}]^\herm\right) , 
		\end{equation}
		where $\Sigmam^h=\bE[\hv \hv^\herm] $,
		$\bE[\hv \yv^{{\rm af}}] =  \Sigmam^{h}\Xm^{\rm tr}\Psim ,$ and 
		\begin{equation}\label{eq:y_af_cov}
			\bE[\yv^{\rm af\, \herm} \yv^{{\rm af}}]= \Psim^\herm \Xm^{{\rm tr}\, \herm} \Sigmam^h \Xm^{{\rm tr}}\Psim +\Psim^\herm \Psim +\mathbf{I},
		\end{equation}
		The eigendecomposition of $\Sigmam^h$ can be written as $\Sigmam^h = \Um_h \Lambdam_h \Um_h^\herm$, where $\Um_h\in \bC^{MN\times r} $ is a tall unitary matrix and $\Lambdam_h = {\rm diag}(\lambdav) \in \bR^{r\times r}$ is a diagonal matrix of positive eigenvalues represented by the vector $\lambdav=[\lambda_1,\ldots,\lambda_r]^\transp$. Using this decomposition, the expression in \eqref{eq:y_af_cov} and applying the Sherman-Morrison-Woodbury matrix identity we have
		\begin{equation}
			\begin{aligned}
				\bE [\yv^{{\rm af}\, \herm} \yv^{\rm af} ]^{-1} &= (\Psim^\herm \Psim + \mathbf{I})^{-1} \\
				& -(\Psim^\herm \Psim + \mathbf{I})^{-1} \Psim^\herm \Xm^{{\rm tr}\, \herm}\Um_h \Lambdam_h^{1/2}\times \\
				&\hspace{-20mm} \left( \mathbf{I} + \Lambdam_h^{1/2}\Um_h^\herm \Xm^{{\rm tr}}\Psim (\Psim^\herm \Psim + \mathbf{I})^{-1} \Psim^\herm \Xm^{{\rm tr}\, \herm }\Um_h \Lambdam_h^{1/2}    \right)^{-1} \times\\
				& \Lambdam_h^{1/2} \Um_h^\herm \Xm^{\rm tr} \Psim (\Psim^\herm \Psim +\mathbf{I})^{-1}.
			\end{aligned}
		\end{equation}
		Then we can write the second term appearing within the ${\rm Tr}(\cdot)$ in \eqref{eq:af_distortion_modified} as
		\begin{equation}\label{eq:eqqq0}
			\begin{aligned}
				\bE[\hv \yv^{{\rm af}}] \bE[\yv^{\rm af\, \herm} \yv^{{\rm af}}]^{-1} \bE[\hv \yv^{{\rm af}}]^\herm &=\Um_h \Lambdam_h^{1/2}\Gm \Lambdam_h^{1/2}\Um_h^\herm \\&\hspace{-20mm}- \Um_h\Lambdam_h^{1/2}\Gm \left( \mathbf{I} + \Gm \right)^{-1}\Gm\Lambdam_h^{1/2}\Um_h^\herm,
			\end{aligned}
		\end{equation}
		where we have defined 
		\begin{equation}\label{eq:G_def}
			\begin{aligned}
				\Gm &= \Lambdam_h^{1/2}\Um_h^\herm \Xm^{\rm tr}\Psim (\Psim^\herm \Psim + \mathbf{I})^{-1} \Psim^\herm \Xm^{{\rm tr}\, \herm} \Um_h\Lambdam_h^{1/2}\\
			\end{aligned}
		\end{equation}
		Plugging \eqref{eq:eqqq0} into \eqref{eq:af_distortion_modified} we have
		\begin{equation}\label{eq:distor_1}
			d(\hv,\widehat{\hv}) = \trace \left( \Lambdam_h \left(\mathbf{I} -  \Gm + \Gm ( \mathbf{I}+\Gm)^{-1} \Gm \right)\right). 
		\end{equation}
		This formula is exactly the same as \eqref{eq:Dmmse_2} except for the definition of $\Gm$. Therefore the same trace inequality as in \eqref{eq:mmse_bound} holds here for the CSI estimation distortion at the BS, i.e. we have 
		\begin{equation}\label{eq:tr_bounds}
			\lambda_{\rm min} \, g(\snrdl) \le d(\hv,\widehat{\hv}) \le \lambda_{\rm max} \, g(\snrdl),
		\end{equation}
		where $g(\snrdl) = \trace \left( \mathbf{I} -  \Gm + \Gm ( \mathbf{I}+\Gm)^{-1} \Gm \right) $. We now show how $g(\cdot)$ behaves as a function of $\snrdl$. For a given realization of $\Xm^{\rm tr}$, denote the eigenvalues of $\Gm$ by $\mu_{i},\, i=1,\ldots, r$. We can write
		\begin{equation}\label{eq:tr_expansion}
			\begin{aligned}
				g(\snrdl) &= r - \sum_i \mu_i +\sum_i \frac{\mu_i^2 }{\mu_i + 1} \\ 
				& =  r - \sum_i \frac{\mu_i }{\mu_i + 1}
			\end{aligned}
		\end{equation} 
		From the definition in \eqref{eq:G_def}, the constituents of $\Gm$ depend on $\snrdl$ as follows:
		\begin{itemize}
			\item[(a)] We can represent the training matrix as $\Xm^{\rm tr} = \sqrt{\snrdl}\Xm_0^{\rm tr}$, where $\Xm_0^{\rm tr}$ is generated randomly independent from $\snrdl$. Hence, for a single realization the elements of $\Xm^{\rm tr}$ scale with $\snrdl$ as $\Theta (\sqrt{\snrdl})$.
			\item[(b)] From constraint \eqref{eq:psi_vecs}, one can show that each column of $\Psim$ can be written as $\psiv_i = \sqrt{a_i} \phiv_i$, where $a_i = \frac{M\kappa \snrdl }{c_i \snrdl +1}$ with $c_i = \phiv_i^\herm \Xm_0^{{\rm tr}\, \herm }\Sigmam_h \Xm^{\rm tr}_0 \phiv_i$. Here $\phiv_i, \, i\in [\beta_{\rm fb}]$ are a set of unit-norm vectors that contains a subset of $\min(\beta_{\rm tr},\beta_{\rm fb})$ linearly independent vectors, and are independent of $\snrdl$. The existence of this set is guaranteed because $\Sigmam_{\yv^{\rm tr}}$ is full-rank. It follows that
			\begin{equation}
				\begin{aligned}
					\Psim (\Psim^\herm \Psim + \mathbf{I} )^{-1} \Psim^\herm = \Phim (\Phim^\herm \Phim + \Sm )^{-1} \Phim^\herm, 
				\end{aligned}
			\end{equation}
			where $\Phim = [\phiv_1,\ldots,\phiv_{\beta_{\rm fb}}]$ is independent of $\snrdl$ and $\Sm$ is a diagonal matrix whose $(i,i)$ element is given by $S_{i,i} = \frac{c_i}{M\kappa}+ \frac{1}{M\kappa \snrdl}$. The matrix $\Sm$ is the only variable dependent on $\snrdl$, and its diagonal elements are asymptotically scaling as $S = \Theta (1)$. 
			\item[(c)] The matrices $\Um_h$ and $\Lambdam_h$ are independent of $\snrdl$.
		\end{itemize}
		From these we conclude that the non-zero eigenvalues of $\Gm$ scale as $\Theta (\snrdl )$, i.e. $\mu_i = \Theta (\snrdl)$ for all $\mu_i \neq 0$.

		Now, the rank of $\Gm$ depends on the specific realization of $\Xm^{\rm tr}$. When $\min(\beta_{\rm tr},\beta_{\rm fb})\ge r$ and $\Xm^{\rm tr}$ contains isotropic Gaussian pilot vectors, $\Gm$ is full-rank with probability one because of the same argument as used in Part I of the proof of Theorem \ref{thm:rate_asymp} and considering the fact that the constituent matrix $\Psim (\Psim^\herm \Psim + \mathbf{I} )^{-1} \Psim^\herm$ is positive semi-definite with rank $\min (\beta_{\rm tr},\beta_{\rm fb})$. In this case we have $d (\hv,\widehat{\hv})=\Theta (\snrdl^{-1})$ and therefore an error of $\Theta (\snrdl^{-1})$ is achievable. Conversely, if $r>\beta_{\rm tr}$ or $r>\beta_{\rm fb}$, $\Gm$ has rank at most $\min (\beta_{\rm tr},\beta_{\rm fb}) < r$ for any design of pilot matrices, leading to $\mu_i= 0$ for some $i$ and from \eqref{eq:tr_bounds}, the error is bounded from below and above by constants, i.e. we have an error of $\Theta (1)$. Therefore AF achieves an error of $\Theta(\snrdl^{-\alpha}) $ with probability one over the realizations of $\Xm^{\rm tr}$ where $\alpha = \mathbf{1}_{ [r,\infty) }\left(\min(\beta_{\rm tr},\beta_{\rm fb}) \right)$. This completes the proof.\hfill $\blacksquare$
	
	{\small
		\bibliographystyle{IEEEtran}
		\bibliography{references}
	}
	
\end{document}